\journal{}
\DeclareMathAlphabet{\mathup}{OT1}{\familydefault}{m}{n}
\newcommand{\ip}[2]{\langle{#1}, {#2}\rangle}
\newcommand{\NqNisp}{{N_\text{quad}^\text{nisp}}}
\newcommand{\Nord}{{N_\text{ord}}}
\newcommand{\Np}{{N_\text{p}}}
\newcommand{\Npc}{{N_\text{PC}}}
\newcommand{\Nq}{{N_\text{quad}}}
\newcommand{\Nkl}{{N_\text{kl}}}
\newcommand{\norm}[1]{\| {#1} \|}
\newcommand{\ffn}[1]{f^{({#1})}}
\newcommand{\fn}{f^{(n)}}
\renewcommand{\vec}[1]{{\mathchoice
                     {\mbox{\boldmath$\displaystyle{#1}$}}
                     {\mbox{\boldmath$\textstyle{#1}$}}
                     {\mbox{\boldmath$\scriptstyle{#1}$}}
                     {\mbox{\boldmath$\scriptscriptstyle{#1}$}}}}
\newcommand{\trace}{\mathrm{Tr}}
\newcommand{\eps}{\mbox{\large\(\varepsilon\)}}
\newcommand{\E}[1]{\mathbb{E}\left\{ {#1} \right\}}
\newcommand{\EE}[2]{\mathbb{E}_{{#1}}\left\{ {#2} \right\}}
\newcommand{\var}[1]{\mathbb{V}\left\{ {#1} \right\}}
\newcommand{\C}{\mathcal{C}}
\newtheorem{theorem}{Theorem}[section]
\newtheorem{proposition}{Proposition}[section]
\newtheorem{remark}{Remark}[section]
\newtheorem{thm}{Theorem}[section]
\newtheorem{cor}[thm]{Corollary}
\newtheorem{lem}{Lemma}
\newtheorem{prop}[thm]{Proposition}
\newcommand{\R}{\mathbb{R}}
\newcommand{\Dtot}{D_\text{tot}}
\newcommand{\Stot}{S_\text{tot}}
\newcommand{\gen}[1]{\mathfrak{{#1}}}
\newcommand{\gStot}{\gen{S}_\text{tot}}
\newcommand{\xx}{\vec{\xi}}
\newcommand{\xu}{{\xx_U}}
\newcommand{\zz}{{\xx_{U^\complement}}}
\newcommand{\zznom}{{\bar\xx_{U^\complement}}}
\newcommand{\prob}{\mathbb{P}}
\newcommand{\Oz}{{\Omega_2}}
\newcommand{\tran}{{\mkern-1.5mu\mathsf{T}}}                
\DeclareMathSymbol{\ast}{\mathbin}{symbols}{"03}
\begin{document}

\begin{frontmatter}

\author[label1]{Alen Alexanderian}
\address[label1]{Department of Mathematics, North Carolina State University, Raleigh, NC}

\ead{alexanderian@ncsu.edu}
\author[label1]{Pierre A.~Gremaud}
\ead{gremaud@ncsu.edu}

\author[label1]{Ralph C.~Smith}
\ead{rsmith@ncsu.edu}

\title{Variance-based sensitivity analysis for time-dependent processes}

\begin{abstract}
The global sensitivity analysis of time-dependent processes requires
history-aware approaches. We develop for that purpose a variance-based method
that leverages the correlation structure of the problems under study and
employs surrogate models to accelerate the computations.   The errors resulting
from fixing unimportant uncertain parameters to their nominal values are
analyzed through a priori estimates.   We illustrate our approach  on a
harmonic oscillator example and on a nonlinear dynamic cholera model.  
\end{abstract}

\begin{keyword}
Global sensitivity analysis, Sobol' indices, Karhunen--Lo\`{e}ve expansion, 
time-dependent processes, 
surrogate models, 
polynomial chaos, 
uncertainty quantification
\end{keyword}

\end{frontmatter}

\section{Introduction}

%
%
The ability to make reliable predictions from time-dependent mathematical models of the form 
\begin{equation}\label{equ:basic_model}
Y = f(t, \xx), \quad t \in [0, T],
\end{equation}
where $\xx \in \R^\Np$ is a vector of uncertain model parameters, relies
crucially on understanding and  quantifying the impact  of $\xx$ on $f$. One
approach, due to Sobol', consists in apportioning to each element (or
group of elements) of $\xx$  its contribution to the variance of $f$
\citep{Sobol93,Sobol01,Owen13,SaltelliRattoAndresEtAl08}.  Such a global
sensitivity analysis (GSA) enables focusing  computational resources on
quantifying the uncertainties in the elements of $\xx$ that are most
influential on the variability of $f$. 
The present work is about extending Sobol's approach to problems with 
time-dependent outputs.

Most of the literature on global sensitivity analysis considers scalar outputs as opposed to the functional framework corresponding to (\ref{equ:basic_model}).
This amounts, for instance, to analyzing the  sensitivity  
of $f(t_0, \xx)$  for a fixed $t_0$ or to the study of integrated 
quantities such as $y(\xx) = \int_0^T f(t, \xx) \,
dt$. 
While it is  possible to apply Sobol's approach pointwise in time
\citep{AlexanderianWinokurSrajEtAl12,Namhata2016OladyshkinDilmoreEtAl16}, 
for instance at the nodes of a grid, 
\begin{equation}\label{equ:time_grid}
   0 = t_0 < t_1 < \cdots < t_{n-1} < t_n = T,
\end{equation} 
this approach presents two shortcomings. First, treating the $f(t_k,
\xx)$'s, $k=1, \dots, n$,  independently of one another ignores the
temporal correlation structure of the process.  Second, the variance of the process
itself varies in time therefore skewing relative importance measurements across
time.   More precisely, a ``yardstick" is needed at each time to determine
the influential parameters at that time; for the standard Sobol's indices,
this yardstick is the variance of $f$ at the corresponding time. When the
yardstick changes with time, confusion ensues: how to compare carrying a small
portion of a large variance with a large portion of a small one? These delicate scaling issues are also present in derivative-based sensitivity
analysis.  

As an illustrative example, consider an underdamped mechanical oscillator whose motion is 
governed by the initial value problem
\begin{equation}\label{equ:IVP}
\begin{aligned}
   &y'' + 2 \alpha y' + (\alpha^2 + \beta^2) y = 0,\\ 
   &y(0)  = \ell, \quad 
   y'(0) = 0.
\end{aligned}
\end{equation}
The solution is 
\begin{equation}\label{equ:pendulum_sol}
y(t; \alpha, \beta, \ell) = \ell e^{-\alpha t} (\cos \beta t + \frac{\alpha}{\beta} \sin \beta t),
\end{equation}
and the corresponding process  is given by $f(t, \xx) = y(t; \xx)$,
where $\xx$ is a random vector that parameterizes the uncertainty in the parameters
$(\alpha, \beta, \ell)$.
Figure~\ref{fig:pendulum_init} (left)  shows the time evolution of the 
mean trajectory (solid line) and the two standard deviation bounds (dashed lines). 
The values of the traditional pointwise total Sobol' indices, which we recall in Section~\ref{sec:var}, are reported in Figure~\ref{fig:pendulum_init} (right).
 \begin{figure}[h]\centering
\includegraphics[width=.45\textwidth]{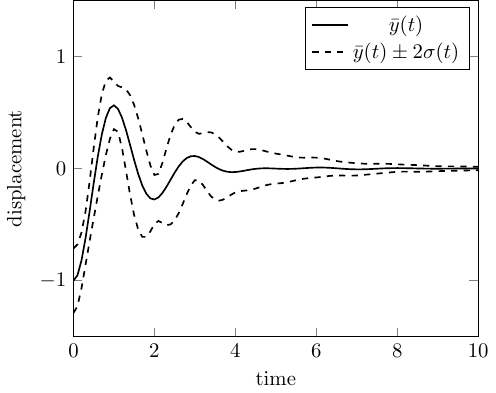}
\includegraphics[width=.45\textwidth]{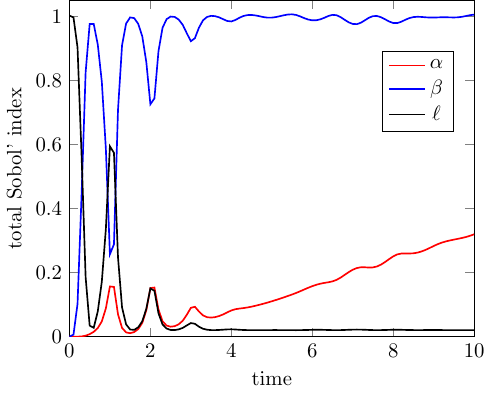}
\caption{Behavior of the mechanical oscillator problem (\ref{equ:IVP}) with uncertain parameters  $\alpha \sim \mathcal U(3/8, 5/8)$, 
$\beta \sim \mathcal U(10/4, 15/4)$  and $\ell \sim \mathcal U(-5/4,-3/4)$. Left: mean trajectory $\bar{y}(t)$ and  
the two standard deviation bounds, obtained via Monte Carlo sampling in the
uncertain parameter space. Right: standard Sobol' indices over time.
}
\label{fig:pendulum_init}
\end{figure}
These results are difficult to interpret as the balance of sensitivities changes
multiple times. Moreover,  this standard approach is entirely unaware of the
history of the process and, specifically here, of the asymptotically
diminishing variance.  For instance,  the  reported increasing influence of
$\alpha$ is largely an artifact of the method.   
We revisit this example throughout the article and, 
in Section~\ref{sec:cholera}, investigate similar issues on a more involved
dynamical system modeling  the spread of cholera.

Porting variance based GSA methods from the scalar case to the vectorial case or, more generally, to the functional case corresponding to   
(\ref{equ:basic_model}) presents three challenges:

\begin{enumerate}
\item[I.] the yardstick issue,
\item[II.] the need for a functional framework allowing analysis and method development,
\item[III.] the need for a computational framework allowing the implementation of efficient algorithms on 
realistic problems and applications. 
\end{enumerate} 

Our work is motivated by  \citep{GamboaJanonKleinEtAl14} which resolves challenge I in a general setting. 
We offer here key contributions to the resolutions of challenges II and III.  

\begin{itemize}
\item We establish in Section~\ref{sec:method} the equivalence of  the indices
from \citep{GamboaJanonKleinEtAl14}  to those previously proposed
in~\citep{LamboniMonodMakowski11}; see Theorem~\ref{thm:main}.  

\item We analyze, in the functional case,  the effect
of fixing inessential variables as determined by computing generalized Sobol'
indices.  This is done theoretically in Section~\ref{sec:unimp} and is
thoroughly investigated in the numerical results in Section~\ref{sec:cholera}.  
\item We leverage our representation of the generalized indices to enable the analysis and efficient implementation of two distinct approaches for the computation of these indices, namely surrogate models (see Section~\ref{sec:pointwise})  and  spectral representations (see Section~\ref{sec:spectral}).   
\item We present
comprehensive numerical results that provide insight into sensitivity analysis
of time-dependent processes; additionally, our numerical results examine
various aspects of our methods and show their effectiveness. 
\end{itemize}

We conclude this Introduction with a brief overview of surrogate models and spectral representations in the context of functional GSA. 

Surrogate models such as polynomial chaos (PC)
expansions~\citep{Wiener:1938,LeMaitreKnio10,Xiu10}, multivariate adaptive
regression splines (MARS)~\citep{friedman93}, and Gaussian processes have become
increasingly popular tools in uncertainty quantification literature; the
references~\citep{CrestauxLeMaitreMartinez09,Sudret08,BlatmanSudret10,
Alexanderian13,HartAlexanderianGremaud17,kleijnen,legratiet} provide a
non-exhaustive sample of the literature on their use for variance-based
sensitivity analysis.  These approaches replace repeated solutions of
computationally expensive models by inexpensive evaluations of a surrogate
model.  They can provide orders of magnitude speedups. While our approach is agnostic the choice of surrogates, we use PC surrogates in our implementation. 
In practice,  the choice of surrogate model  should be based on the demands of the problem at hand.
In the present context, a surrogate model $\tilde{f}(t_k, \xx)
\approx f(t_k, \xx)$ can be  constructed for every $t_k$ in the grid~\eqref{equ:time_grid} and generalized indices can subsequently be approximated at
negligible computational cost (see again Section~\ref{sec:pointwise}).  However, the full approximating power of many state-of-the-arts surrogates can only be harvested at the price of optimizing them at each specific  time $t_k$. This may be prohibitively expensive and  intractable especially when the number of time steps $n$ is large.  

This observation motivates the consideration of  spectral representations and, specifically here, the Karhunen--Lo\`{e}ve (KL) expansion 
\[
    f(t, \xx) \approx f_0(t) + \sum_{j=1}^\Nkl f_i(\xx) e_i(t),
\]
of the process $f$, where $f_0$ is the mean of the process, the 
$f_i(\xx)$'s are expansion coefficients (see Section~\ref{sec:spectral}) with 
variance $\var{f_i} = \lambda_i$ 
and where $\lambda_i$ is an eigenvalue of the covariance operator of $f$ with corresponding 
eigenvector  $e_i$.
The \emph{modes}  $\{f_i\}_{i=1}^\Nkl$ encode the uncertainty in $f$ and the dynamics of the
process is quantified by the superposition of the dominant
eigenvectors $\{ e_i \}_{i = 1}^\Nkl$. 
For processes with  fast decaying eigenvalues $\lambda_i$---which is often observed
in applications---a small truncation level $\Nkl$ can be used, i.e., the process can be represented by a small number of modes. 

The principle and feasibility  of functional GSA based on spectral representations are explored in~\citep{CampbellMcKayWilliams06}. 
These ideas are picked up in~\citep{LamboniMonodMakowski11} (see also~\citep{HongLuyi16})  where 
aggregate Sobol' indices are proposed for vectorial and functional outputs, 
based on the KL expansion of $f$. The unifying theoretical framework of Section~\ref{sec:spectral} is here completed by a thorough discussion of the computational issues linked to the use of spectral representations for functional GSA including the approximation of the covariance function via quadratures,  the computation of the spectral decomposition of the discretized  covariance operator and the use of polynomial surrogates for the KL modes, i.e.,  
 $\tilde f_i(\xx) \approx f_i(\xx)$, 
$i \in \{1, \ldots, \Nkl\}$; see Section~\ref{sec:computational}. 

The approach based on the KL expansion  not only provides an efficient
method for computing the generalized Sobol' indices, it is also structure revealing:  the uncertainty in the output can be captured
efficiently by the dominant KL modes of $f$.  Thus, this approach can also
guide the computation of efficient \emph{global in time} surrogate models to be used in the statistical study of $f$, beyond sensitivity analysis.

The benefits of
combining surrogate models such as PC expansions and random field representations using
KL expansions have been realized in other related works on GSA.
For example, the work~\citep{Pronzato19} presents a novel approach
for efficient computation of Sobol' indices for scalar outputs that combines
PC expansions and random field modeling using
KL expansions.

%
%
\section{Variance-based sensitivity indices for time-dependent processes}\label{sec:var}
For simplicity, we assume the uncertain parameters $\xi_1, \dots, \xi_{N_p}$
to be independent $\mathcal U(-1, 1)$ random variables.  Hence, we work in a
measure space $(\Omega, \mathcal{B}(\Omega), \mu)$, where $\Omega = [-1,
1]^\Np$, $\mathcal{B}(\Omega)$ is the Borel sigma-algebra on $\Omega$, and the
probability measure $\mu$ is the normalized $\Np$-dimensional Lebesgue measure
on $\Omega$: $\mu(d\xx) = 2^{-\Np} d\xx$.  It is straightforward to extend our
definitions and results to the case of any random vector $\xx$ with independent
elements.

We consider a random process $f:[0,T]\times \Omega \to \R$, and assume 
$f \in L^2([0,T]\times \Omega)$. Moreover, we assume $f$ to 
be mean-square continuous:
\begin{eqnarray}
    \lim_{h \to 0} \int_\Omega \big(f(t+h, \xx) - f(t, \xx)\big)^2 \, \mu(d\xx) = 0, \quad \text{for all } t \in [0, T]. \label{msc}
\end{eqnarray}
It follows that the mean $f_0(t) = \int_\Omega f(t, \xx) \, \mu(d\xx)$ and the covariance function 
\begin{equation}\label{equ:cov}
c(s, t) = \int_\Omega \big(f(s, \xx) - f_0(s)\big)\big(f(t, \xx) - f_0(t)\big)\, 
\mu(d\xx), \quad s, t \in [0, T], 
\end{equation}
are continuous on $[0,T]$ and $[0,T]\times[0,T]$ respectively~\citep[Theorem 7.3.2]{HsingEubank15},~\citep[Theorem 2.2.1]{Adler10}.  
In practice,
 the covariance function can be approximated through sampling
\begin{equation}\label{equ:cov_samp}
c(s, t) \!\approx\! c^N(s, t) \!=\! \frac{1}{N\!-\!1} \!\sum_{k=1}^{N} 
f_c(t, \xx^k) f_c(s, \xx^k), \quad f_c(t, \xx^k) \!=\! f(t, \xx^k) -  
\frac1N\!\sum_{j = 1}^N f(t, \xx^j). 
\end{equation}
Without loss of generality, it is possible to consider only centered processes, i.e., $f_0 \equiv 0$; we do so below.

\begin{remark}
We point out an important implication of the mean-square continuity assumption.
Assuming $f$ is mean-square continuous, we can conclude the existence of 
a modification\footnote{We say $f$ and $g$ are modifications of one another if 
for all $t \in [0, T]$,
$g(t, \cdot) = f(t, \cdot)$ almost surely.}
$g$ of $f$ such that $g$ is jointly measurable on the product space $(\Omega, \mathcal{B}(\Omega)) \otimes ([0, T], \mathcal{B}([0, T]))$;
see Proposition~3.2 in~\citep{DaPratoZabczyk14}. Note also,
\[
   \int_0^T \int_\Omega |g(t, \xx)|^2 \, \mu(d\xx) \, dt =  
   \int_0^T \int_\Omega |f(t, \xx)|^2 \, \mu(d\xx) \, dt 
   = \int_0^T c(t, t) \, dt < \infty.  
\] 
Therefore, as a consequence of Fubini's Theorem~\citep{big_Rudin}, $g \in L^2([0, T]
\times \Omega)$. Thus, 
due to the mean-square continuity assumption, 
replacing $f$ with a suitable modification, 
the requirement that $f \in L^2([0, T] \times \Omega)$ is satisfied. 
\end{remark}

\subsection{Sobol' indices}
Consider  the index set $X = \{1, \ldots, \Np\}$ and a subset $U = \{i_1, i_2, \ldots, i_s\} \subset X$.
We define  $\xu = (\xi_{i_1}, \xi_{i_2}, \ldots, \xi_{i_s})$ 
and  $\zz =  (\xi_{j_1}, \xi_{j_2}, \ldots, \xi_{j_{s'}})$ with $\{ j_1, j_2, \ldots, j_{s'}\} = X \setminus U = U^\complement$.
At each time $t$, we write $f$ according to its second-order ANOVA-like decomposition
\begin{eqnarray}
    f(t, \xx) =  f_U(t, \xu) + f_{U^\complement}(t, \zz) + f_{U,U^\complement}(t, \xx), \label{anova}
\end{eqnarray}
where
\[
\begin{aligned}
f_U(t, \xu) &:= \mathbb E\{f|\xu\},\\ 
f_{U^\complement}(t, \zz) &:= \mathbb E\{f|\zz\},\\
f_{U,U^\complement}(t, \xx) &:=  f(t, \xx) -  f_U(t, \xu) - f_{U^\complement}(t, \zz).
\end{aligned}
\]
The total variance $D(f;t)$ of $f$ can correspondingly be decomposed into 
\[
   D(f; t) = D^U(f; t) + D^{U^\complement}(f; t) + D^{U,{U^\complement}}(f; t), 
\]
where
\[
   D^U(f; t) := \EE{\xu}{f_U(t, \xu)^2}, \quad 
   D^{U^\complement}(f; t) := \EE{\zz}{f_{U^\complement}(t, \zz)^2}.
\]
The standard pointwise first and total Sobol' indices for $\xu$ are then defined by apportioning to the 
$\xu$ parameters their relative contribution to the variance of $f$   \citep{Sobol93,Sobol01} 
\begin{eqnarray}
    S^U(f; t) :=  \frac{D^{U}(f; t)}{D(f; t)}, \quad   
    \Stot^U(f; t) := \frac{\Dtot^U(f; t)}{D(f; t)}, \label{tradsobol}
\end{eqnarray}
where $\Dtot^{U}(f; t) := D^U(f; t) + D^{U,{U^\complement}}(f; t)$.
\subsection{Generalized Sobol' indices for time-dependent problems}
Pointwise in time indices such as (\ref{tradsobol}) ignore all time correlations. To characterize these correlations, 
we consider the covariance operator $\C:L^2([0, T]) \to L^2([0, T])$ of $f$,
\begin{equation}\label{equ:covariance_operator}
    \C[u](s) = \int_0^T c(s, t) u(t)\, dt,
\end{equation} 
where the covariance function $c$ is defined in~\eqref{equ:cov}; $\C$ is a trace-class positive selfadjoint 
operator with eigenvalues 
$\{ \lambda_i \}_{i = 1}^\infty$ and a complete  set of orthonormal eigenvectors $\{ e_i \}_{i = 1}^\infty$.
By Mercer's theorem~\citep{Mercer1909,Lax02}, we have
\begin{equation}\label{equ:mercer}
   c(s, t) = \sum_{j = 1}^\infty \lambda_j e_j(s) e_j(t),
\end{equation}
where the convergence of the infinite sum is uniform and absolute in $[0, T] \times [0, T]$.

Let $c_U$ and $\C_U$ be respectively the  covariance function and
covariance operator corresponding to $f_U$ from (\ref{anova}).
Following~\citep{GamboaJanonKleinEtAl14}, the generalized first order sensitivity
index for $\xu$ can be defined as
\begin{equation}\label{equ:generalized_tr}
\gen{S}^U(f; T) := \frac{\trace(\C_U)}{\trace(\C)}.
\end{equation}
The next result shows the generalized indices to be nothing but the ratio of the time integrals of the numerator and denominator of the standard indices. 
\begin{proposition}\label{prp:integrated_indices}
Let the random process $f$ be as above. Then,
\begin{equation}\label{equ:generalized}
\gen{S}^U(f; T) = \frac{\int_0^T D^{U}(f; t)\,dt}{\int_0^T D(f; t)\,dt}. 
\end{equation}
\end{proposition}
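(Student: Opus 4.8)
The plan is to establish the two identities $\trace(\C) = \int_0^T D(f; t)\,dt$ and $\trace(\C_U) = \int_0^T D^U(f; t)\,dt$ separately; the claim~\eqref{equ:generalized} then follows immediately by taking the ratio in the definition~\eqref{equ:generalized_tr}. In other words, the whole proposition reduces to the single observation that the trace of each covariance operator equals the time integral of the diagonal of its kernel.

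First I would recall that a trace-class operator has trace equal to the sum of its eigenvalues, so $\trace(\C) = \sum_{j=1}^\infty \lambda_j$. The bridge from this spectral quantity back to the covariance function is Mercer's theorem~\eqref{equ:mercer}: evaluated on the diagonal it gives $c(t, t) = \sum_{j=1}^\infty \lambda_j\, e_j(t)^2$, with the series converging uniformly on $[0, T]$. I would then integrate over $[0, T]$ and interchange the sum with the integral, so that, using the orthonormality $\int_0^T e_j(t)^2\, dt = 1$, one obtains $\int_0^T c(t, t)\, dt = \sum_{j=1}^\infty \lambda_j \int_0^T e_j(t)^2\, dt = \sum_{j=1}^\infty \lambda_j = \trace(\C)$.

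Next I would connect the diagonal of the covariance to the pointwise variance. Since we have reduced to centered processes ($f_0 \equiv 0$), the definition~\eqref{equ:cov} gives $c(t, t) = \int_\Omega f(t, \xx)^2\, \mu(d\xx) = D(f; t)$, whence $\trace(\C) = \int_0^T D(f; t)\, dt$. The identical argument applies to $\C_U$ and its kernel $c_U$: the conditional expectation $f_U = \E{f \mid \xu}$ is again centered, its diagonal satisfies $c_U(t, t) = \EE{\xu}{f_U(t, \xu)^2} = D^U(f; t)$, and Mercer's theorem applied to $c_U$ yields $\trace(\C_U) = \int_0^T D^U(f; t)\, dt$. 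Substituting both expressions into~\eqref{equ:generalized_tr} delivers~\eqref{equ:generalized}.

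The only genuine subtlety, and the step I expect to require the most care, is the interchange of summation and integration used to evaluate $\int_0^T c(t,t)\,dt$. This is dispatched by the uniform (and absolute) convergence of the Mercer series on the compact interval $[0, T]$, which legitimizes term-by-term integration; everything else in the argument is a matter of unwinding the definitions of $D(f;t)$, $D^U(f;t)$, and the trace.
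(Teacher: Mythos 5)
Your proposal is correct and follows essentially the same route as the paper: apply Mercer's theorem on the diagonal, integrate term by term using orthonormality to identify $\int_0^T c(t,t)\,dt$ with $\trace(\C)$, and treat the numerator identically via $c_U$ and $\C_U$. The only cosmetic difference is that you justify the sum--integral interchange by the uniform convergence of the Mercer series, whereas the paper invokes the Monotone Convergence Theorem; both are valid here since the terms $\lambda_j e_j(t)^2$ are nonnegative.
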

\begin{proof}
Considering the denominator in (\ref{equ:generalized}), we obtain
\begin{multline*}
\int_0^T D(f; t) \, dt 
= \int_0^T c(t, t) \, dt = \int_0^T \sum_{j = 1}^\infty \lambda_j e_j(t)^2 \, dt 
= \sum_{j = 1}^\infty \lambda_j \int_0^T e_j(t)^2 \, dt 
= \sum_{j = 1}^\infty \lambda_j = \trace(\C),
\end{multline*}
where the second equality follows from~\eqref{equ:mercer},
the interchange of integral and summation
is justified by the Monotone Convergence Theorem, and the third equality uses
the fact that eigenvectors are orthonormal. 
The numerator can be treated similarly. 
\end{proof}

The integrals in  (\ref{equ:generalized}) can be computed  via a quadrature formula
on $[0, T]$, with nodes $\{ t_m \}_{m = 1}^\Nq$ and weights $\{w_m\}_{m = 1}^\Nq$, yielding 
the approximation 
\begin{equation}\label{equ:gStot_discrete}
\gen{S}^U(f; T) \approx \frac{\sum_{m = 1}^\Nq w_m D^{U}(f; t_m) }{ \sum_{m = 1}^\Nq w_m D(f; t_m)}.
\end{equation}
The special case of equal weights and uniform time steps in (\ref{equ:gStot_discrete}) corresponds to the 
approach suggested in 
\citep{GamboaJanonKleinEtAl14} for sensitivity analysis for time-dependent processes.

Similarly to (\ref{equ:generalized}), we define
generalized total Sobol' indices as 
\begin{equation}\label{equ:newtotsobol}
\gStot^U(f; T) := \frac{\int_0^T \Dtot^{U}(f; t)\,dt}{\int_0^T D(f; t)\,dt}.
\end{equation}
We note that
$
\gStot^U(f; T) 
=1-\frac{\int_0^T D^{U^\complement}(f; t) \,dt}{\int_0^T D(f; t)\,dt}
=1-\gen{S}^{U^\complement}(f; T) = \frac{\trace(\C) - \trace(\C_{U^\complement})}{\trace(\C)}.
$
Further, as their pointwise counterparts, the generalized total Sobol' indices
$\gStot^U(f; T)$ admit an approximation theoretic interpretation; namely, for a
centered $f$
\[
\gStot^U(f; T)  = \frac{\|f - \mathcal P_{U^\complement} f \|_{L^2([0,T] \times \Omega)}^2}{\| f\|_{L^2([0,T] \times \Omega)}^2}
\]
where $\mathcal P_{U^\complement}$ is the orthogonal projector $\mathcal P_{U^\complement}:
L^2([0,T]; L^2(\Omega)) \to L^2([0,T]; V_{U^\complement})$ with $V_{U^\complement} \subset
L^2(\Omega)$ being, roughly speaking, the subspace of functions that do not
depend on $\xx_U$ in $L^2(\Omega)$, see \citep{hg} for full justification. 

If a fine Monte Carlo (MC) sampling of $f$ is feasible, the partial variances
appearing in the expressions for $\gen{S}^U(f; T)$ and $\gStot^U(f; T)$ can be
computed using traditional MC-based algorithms for estimating the pointwise
Sobol' indices; see e.g.,\citep{Sobol01}.  However, computing the generalized
indices via direct Monte Carlo sampling is in general expensive. This is due to
the need for a large number of function evaluations. In
section~\ref{sec:method}, we present efficient methods for computing these
indices using suitable approximations of $f$.

To illustrate the concepts introduced so far, we
return to the  mechanical oscillator example~\eqref{equ:IVP} and compute its
generalized total Sobol' indices; see Figure~\ref{fig:pendulum_sobol_timedep}
(left). Figure~\ref{fig:pendulum_sobol_timedep} (right) illustrates the
evolution of the generalized Sobol' indices over successively larger intervals.
These results provide a clear analysis of the relative importance of  the input
parameters along with a ``history aware" description of the evolution of these
relative importance measurements. While the pointwise in time Sobol' indices
show a significant growing influence of $\alpha$ over time, see again
Figure~\ref{fig:pendulum_init}(right), the generalized indices stabilize
quickly and provide an importance assessment of the variables that is
consistent over time.

\begin{figure}[ht]\centering
\includegraphics[width=.45\textwidth]{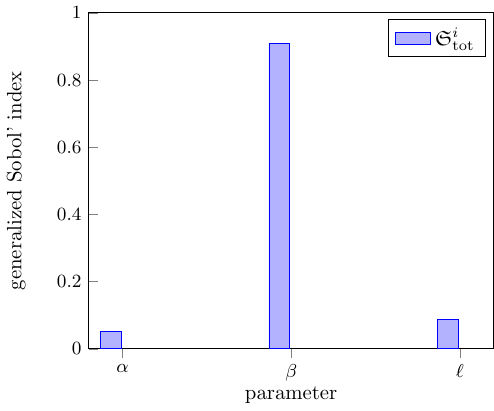}
\includegraphics[width=.45\textwidth]{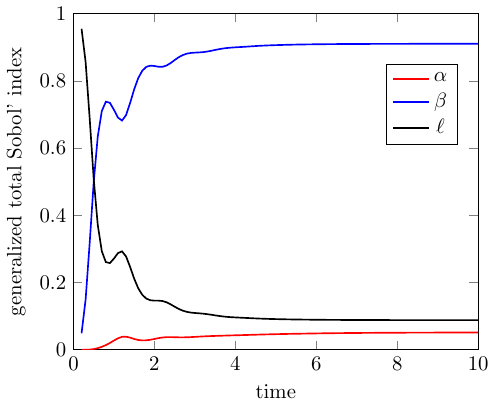}
\caption{Behavior of the mechanical oscillator problem (\ref{equ:IVP}) with uncertain parameters  $\alpha \sim \mathcal U(3/8, 5/8)$, 
$\beta \sim \mathcal U(10/4, 15/4)$  and $\ell \sim \mathcal U(-5/4,-3/4)$. Left: generalized Sobol' indices $\gStot^U(y; T)$, $U=\alpha, \beta, \ell$, with $T = 10$ and $y$ is the solution to (\ref{equ:IVP}).
Right: $\gStot^U(y; \tau)$ with $\tau \in (0, 10)$.} 
\label{fig:pendulum_sobol_timedep}
\end{figure}

%
%
\section{A priori estimates with fixed unimportant variables}\label{sec:unimp}
Suppose we have identified a subset $\xu$, $U \subset X = \{1, \ldots, \Np\}$, of parameters such that 
$\gStot^{U^\complement}(f; T)$ is small compared to $\gStot^U$. 
In other words, the parameters $\zz$ are unimportant and it should thus be possible to fix them at some nominal value $\zznom$ and consider the 
``reduced" function 
\[
 \bar f(t, \xu) = f(t, \xu, \zznom),
\]
as a reasonable approximation of $f$. The next result formalizes this line of thought by establishing a direct link between $\gStot^\zz(f; T)$ and a 
measure of the relative error attached to the approximation $f \approx \bar f$; this result generalizes to time dependent problems the work of 
\citep{SobolTarantolaGatelliEtAl06} on stationary problems.

\begin{prop}\label{prp:basic}
Let $\eps(f; t, \zznom) =  \frac12 \int  \left(f(t, \xx) - \bar f(t, \xu)\right)^2\, \mu(d\xx)$. Then 
\[
\mathscr E := \frac{\int_0^T \eps(f; t, \zznom)\,dt}{\int_0^T D(f; t) \, dt}
\]
provides a  measurement of the relative error linked to the approximation $f \approx \bar f$ and furthermore
\[
\mathbb E\{\mathscr E\} = \gStot^\zz(f; T).
\]
\begin{proof}
Let $\Oz = [-1, 1]^{\text{dim}(\zz)}$ where $\text{dim}(\zz)$ denotes the dimension of $\zz$, 
and let $\mu_2$ be the normalized Lebesgue measure on $\Oz$.
We have
\begin{multline*}
   \mathbb E\left\{ \int_0^T \eps(f; t, \zznom)\,dt \right\} 
   = \int_\Oz \int_0^T \eps(f; t, \zznom)\,dt \, \mu_2(d\zznom)
   = \int_0^T \int_\Oz \eps(f; t, \zznom) \, \mu_2(d\zznom) \, dt\\
   = \int_0^T \mathbb E\{ \eps(f; t, \zznom) \} \, dt,
\end{multline*}
where the interchange of integrals follows from Tonelli's theorem. 
Further, by the theorem proved in~\citep{SobolTarantolaGatelliEtAl06},  
$\mathbb E\{ \eps(f; t, \zznom) \} = \Dtot^{U^\complement}(f; t)$
for each $t \in [0, T]$,  
and thus
\[
 \mathbb E\left\{ \int_0^T \eps(f; t, \zznom)\,dt \right\} = \int_0^T \Dtot^{U^\complement}(f; t)\,dt.
\]
This proves the result since $D(f;t)$ is deterministic.
\end{proof}
\end{prop}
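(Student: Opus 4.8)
The plan is to evaluate $\mathbb E\{\mathscr E\}$ directly, exploiting the fact that its denominator $\int_0^T D(f;t)\,dt$ is deterministic: the total variance $D(f;t)$ does not depend on the nominal value $\zznom$ at which the unimportant variables are frozen. All the randomness of $\mathscr E$ (over $\zznom$, distributed according to the normalized Lebesgue measure $\mu_2$ on $\Oz = [-1,1]^{\text{dim}(\zz)}$) therefore resides in the numerator, and the problem reduces to computing $\mathbb E\{\int_0^T \eps(f;t,\zznom)\,dt\}$ and dividing by this fixed denominator.

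The central step is to interchange the expectation over $\zznom$ with the time integral, i.e.\ to write $\mathbb E\{\int_0^T \eps(f;t,\zznom)\,dt\} = \int_0^T \mathbb E\{\eps(f;t,\zznom)\}\,dt$. Since $\eps(f;t,\zznom)$ is $\tfrac12$ times an integral of a square and hence nonnegative, I would justify this swap by Tonelli's theorem, which needs only joint measurability of the integrand in $(t,\zznom)$. That measurability is exactly what the mean-square continuity assumption provides: by the Remark in Section~\ref{sec:var} one may replace $f$ with a jointly measurable modification, so the iterated integrals are well defined and equal in either order.

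It then remains to evaluate the inner expectation at each fixed $t$. I would invoke the pointwise result of \citep{SobolTarantolaGatelliEtAl06}, applied separately for each $t \in [0,T]$, which identifies the expected squared error from fixing the $\zz$-variables with a total partial variance: $\mathbb E\{\eps(f;t,\zznom)\} = \Dtot^{U^\complement}(f;t)$. Substituting this produces $\int_0^T \Dtot^{U^\complement}(f;t)\,dt$ in the numerator, which is precisely the numerator in the definition~\eqref{equ:newtotsobol} of the generalized total Sobol' index for the variable group $\zz = \xx_{U^\complement}$; dividing by the deterministic denominator yields $\mathbb E\{\mathscr E\} = \gStot^\zz(f;T)$. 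The proof is short because its two potentially delicate ingredients are disposed of elsewhere: the integral interchange is routine once nonnegativity and measurability are secured (the measurability being the real content of the mean-square continuity hypothesis), while the genuinely probabilistic identity $\mathbb E\{\eps\} = \Dtot^{U^\complement}$ is imported as a black box. The one point I would double-check is that this pointwise theorem applies at every $t$ under our standing $L^2$ and independence assumptions, so that the substitution holds uniformly in time.
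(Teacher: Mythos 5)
Your proposal is correct and follows essentially the same route as the paper: Tonelli's theorem to swap the expectation over $\zznom$ with the time integral, the pointwise identity $\mathbb E\{\eps(f;t,\zznom)\} = \Dtot^{U^\complement}(f;t)$ from \citep{SobolTarantolaGatelliEtAl06} applied at each $t$, and the observation that the denominator is deterministic. Your added remark that the joint measurability needed for Tonelli comes from the mean-square-continuity modification is a sensible elaboration of a point the paper leaves implicit.
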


A more explicit probabilistic interpretation of Proposition~\ref{prp:basic} can be established by considering the quantity
$\rho = \mathscr E/\gStot^{U^\complement}(f; T)$
and noting that $\mathbb E\{  \rho\} = 1$. As $\rho \ge 0$, the following result is a direct consequence of 
 Markov's inequality.
\begin{cor}
For every $\eps > 0$,
\[
   \prob\Big(\mathscr E \geq \frac{1}{\eps} \gStot^{U^\complement}(f; T)\Big) \leq \eps.
\]
\end{cor}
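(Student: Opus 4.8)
The plan is to apply Markov's inequality to the nonnegative random variable $\rho = \mathscr E / \gStot^{U^\complement}(f; T)$. First I would observe that $\gStot^{U^\complement}(f; T)$ is a deterministic quantity (it is a ratio of time-integrated variances, with no dependence on the realized nominal value $\zznom$), so dividing $\mathscr E$ by it produces a genuine random variable whose randomness comes entirely from $\zznom$. The key input is Proposition~\ref{prp:basic}, which gives $\mathbb E\{\mathscr E\} = \gStot^\zz(f; T) = \gStot^{U^\complement}(f; T)$; dividing through by the deterministic denominator yields $\mathbb E\{\rho\} = 1$.

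Next I would verify nonnegativity: since $\eps(f; t, \zznom)$ is defined as $\tfrac12 \int (f - \bar f)^2 \, \mu(d\xx) \ge 0$ for every $t$, its time integral is nonnegative, and $\int_0^T D(f;t)\,dt > 0$ (the total variance integrand is a variance, and we assume $f$ is genuinely random so this does not vanish); hence $\mathscr E \ge 0$ and therefore $\rho \ge 0$. With $\rho \ge 0$ and $\mathbb E\{\rho\} = 1$ in hand, Markov's inequality states that for any threshold $a > 0$, $\prob(\rho \ge a) \le \mathbb E\{\rho\}/a = 1/a$.

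Finally I would translate the event $\{\rho \ge a\}$ back into a statement about $\mathscr E$. Setting $a = 1/\eps$ for a given $\eps > 0$, the event $\rho \ge 1/\eps$ is precisely $\mathscr E \ge \tfrac{1}{\eps}\gStot^{U^\complement}(f; T)$, and Markov's bound gives $\prob\big(\mathscr E \ge \tfrac{1}{\eps}\gStot^{U^\complement}(f; T)\big) \le \eps$, which is the claim.

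There is no real obstacle here, as the corollary is explicitly advertised as a direct consequence of Markov's inequality once $\mathbb E\{\rho\} = 1$ is established in the preceding discussion. The only points deserving a line of care are confirming that the denominator $\gStot^{U^\complement}(f; T)$ is deterministic (so that $\rho$ is well defined and the expectation passes through the division cleanly) and that the positivity of the variance integral prevents division by zero; both are immediate from the definitions given earlier in the excerpt.
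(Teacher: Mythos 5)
Your proposal is correct and follows exactly the route the paper intends: apply Markov's inequality to the nonnegative random variable $\rho = \mathscr E/\gStot^{U^\complement}(f; T)$, whose expectation equals one by Proposition~\ref{prp:basic}. The extra care you take in checking that the denominator is deterministic and nonzero is sound but standard; there is no substantive difference from the paper's argument.
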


%
%
\section{Efficient computation of the sensitivity indices}\label{sec:method}
\subsection{Pointwise-in-time surrogate models}\label{sec:pointwise}
To alleviate the cost of computing the generalized
Sobol' indices, we can approximate $f(t, \xx)$ with a 
cheap-to-evaluate surrogate model 
$\tilde{f}(t, \xx)$,  leading to the approximation
\begin{equation}\label{equ:surrogate_sobol}
\gen{S}^U(f; T) \approx \tilde{\gen{S}}^U(f; T) = \frac{\int_0^T D^{U}(\tilde{f}; t)\,dt}{\int_0^T D(\tilde{f}; t)\,dt},
\end{equation}
which can be computed at negligible cost. 
We outline
the corresponding procedure  in Algorithm~\ref{alg:surrogate}. The 
main computational cost  is the evaluations of the process
$f$ at the sampling points $\{\xx^{(j)}\}_{j=1}^N$. Once a surrogate model
$\tilde{f}$ is available, the generalized Sobol' index $\tilde{\gen{S}}^U_\text{tot}(f; T)$
can be computed for any $U \subset \{1, \ldots, \Np\}$ using $\tilde{f}$.

\begin{algorithm}
\renewcommand{\algorithmicrequire}{\textbf{Input:}}
\renewcommand{\algorithmicensure}{\textbf{Output:}}
\caption{Computation of the generalized Sobol' indices via surrogate models 
constructed pointwise in time.}
\label{alg:surrogate}
\begin{algorithmic}[1]
\REQUIRE (i) A quadrature formula on $[0, T]$ with nodes and weights $\{ t_m, w_m \}_{m = 1}^\Nq$.
(ii) function evaluations $\{f(t_m, \xx^{(j)})\}$, $m \in \{1, \ldots, \Nq\}$, 
$j \in \{1, \ldots, N\}$; (iii) An index set $U \subset \{1, \ldots, \Np\}$.

\ENSURE Approximate generalized Sobol' index $\tilde{\gen{S}}^U_\text{tot}(f; T)$. 

\STATE Using the ensemble $\{f(t_m, \xx^{(j)})\}$, construct a surrogate model
$\tilde{f}(t_m, \xx) \approx f(t_m, \xx)$, $m \in \{1, \ldots, \Nq\}$.

\STATE Evaluate the approximate generalized Sobol' index,
\[
\tilde{\gen{S}}^U(f; T) 
= \frac{\sum_{m = 1}^\Nq w_m D^{U}(\tilde{f}; t_m) }{ \sum_{m = 1}^\Nq w_m D(\tilde{f}; t_m)}.
\]
\end{algorithmic}
\end{algorithm}

Polynomial chaos (PC) expansions are commonly used in surrogate modeling.  We now
elaborate on Algorithm~\ref{alg:surrogate} when using PC surrogates. 
PC expansions are series expansion of square integrable random variables in multivariate
orthogonal polynomial bases~\citep{Ghanem:1991a,Xiu10,LeMaitreKnio10}. 
The (truncated) PC representation of $f(t, \xx)$ is of the form 
\begin{equation}\label{equ:PCE}
   f(t, \xx) \approx \sum_{k = 0}^{\Npc} c_k(t) \Psi_k(\xx),
\end{equation}
where $\{\Psi_k\}_{k=0}^\Npc$ is a set of orthogonal polynomials and $\{c_k\}_{k=0}^\Npc$
are expansion coefficients. As $\xx$ 
is assumed to be a $\Np$-dimensional 
uniform random vector, 
we choose $\Np$-variate Legendre polynomials for $\{\Psi_k\}_{k=0}^\Npc$; 
see~\citep{LeMaitreKnio10}. Also, we use total order truncation~\citep{LeMaitreKnio10} and thus
\begin{equation}\label{equ:Npc}
   \Npc + 1 = \frac{\left(\Nord + \Np\right)!}{\Nord!\Np!},
\end{equation}
where $\Nord$ is the maximum total polynomial degree. 
The following are two common approaches for computing PC coefficients 
via sampling; i.e., in a non-intrusive way

\begin{itemize}
\item Non-intrusive spectral projection (NISP), 
\item Regression based methods with sparsity control.
\end{itemize}

Let $u \in L^2_\mu(\Omega) = \{ u : \Omega \to \R : \int_\Omega u(x)^2 \mu(dx) < \infty\}$ to be approximated 
through the  PC representation 
\[
u \approx \sum_{i=0}^\Npc c_k \Psi_k.
\]
The NISP approach~\citep{LeMaitreKnio10,AlexanderianLeMaitrNajmEtAl12,ConradMarzouk13,WinokurConradSrajEtAl13,WinokurKimBisettiEtAl16}, is based on the approximation of Galerkin projections
\begin{equation*}
   \ip{u}{\Psi_l} = 
      \int u(\xx)  \Psi_l(\xx) \mu(d\xx) 
      = \sum_{k = 0}^{\Npc} \int c_k \Psi_k(\xx) \Psi_l(\xx) \, \mu(d\xx)\\
      = \sum_{k = 0}^{\Npc} c_k \ip{\Psi_k}{\Psi_l} 
      = c_l \ip{\Psi_l}{\Psi_l}.
\end{equation*}
through quadrature 
\begin{equation}\label{equ:Gen_Quadrature}
   \ip{u}{\Psi_l} \approx \sum_{j=1}^{\NqNisp} \nu_j
u\big( \xx^{(j)}\big) \Psi_l \big( \xx^{(j)}\big).
\end{equation}
Here $\xx^{(j)} \in \Omega$ and $\nu_j \geq 0$, $j \in\{ 1, \ldots, \NqNisp\}$, 
are quadrature nodes and weights.\footnote{We have denoted the quadrature weights here by $\nu_j$ 
to distinguish them from those in the quadrature formula on the time interval $[0, T]$ 
when computing generalized Sobol' indices; see e.g.,~\eqref{equ:gStot_discrete}.}

Alternatively,  PC coefficients can be computed through regression-based
approaches.  Borrowing ideas from compressive sensing (CS),  sparsity  
is enforced 
by controlling the $\ell_1$
norm of the vector of  PC coefficients. We refer to this  as the CS-based approach.
This approach has been used extensively in recent years for efficient computation 
of PC expansions,  
see e.g.,~\citep{YanGuoXiu12,HamptonDoostan16,FajraouiMarelliSudret17,DiazDoostanHampton18}.

We explain a common formulation of a CS-based approach. We begin by forming 
a sample of points $\{ \xx^{(j)} \}_{j = 1}^N$ in the sample space $\Omega$ and  let $\mathbf{\Lambda} \in
\R^{N\times\Npc}$ be defined by $\Lambda_{jk} = \Psi_k(\xx^{(j)})$, and
$\vec{d} = \big(u(\xx^{(1)}), \ldots, u(\xx^{(N)})\big)^\tran$ be the vector that contains the
function evaluations.  
The vector of PC coefficients is then determined by solving 
\begin{equation}\label{equ:optim}
      \min_{\vec{c} \in \R^{\Npc}} \| \mathbf{\Lambda} \vec{c} - \vec{d} \|_2^2, \quad  
      \text{subject to }  \sum_{k = 0}^\Npc |c_k| \leq \tau. 
\end{equation}
In our computations, we use the solver \textsc{SPGL1}~\citep{spgl1:2007} for the
optimization problem (\ref{equ:optim}).  The parameter $\tau$ that controls
the sparsity of $c$ is found either by trial and error or, more systematically,
through a cross validation procedure. 

Consider now the PC representation $f(t, \xx) \approx \sum_{k = 0}^\Npc c_k(t) \Psi_k(\xx)$.
We have
\begin{equation}\label{equ:Si_PCE}
\gStot^{i}(f; T)
= \frac{\int_0^T \Dtot^{i}(f; t)\,dt}{\int_0^T D(f; t)\,dt}
\approx
\frac{\displaystyle\sum_{k \in \mathcal{K}_i} \| \Psi_k \|^2 \int_0^T c_k(t)^2 \, dt}
{\displaystyle\sum_{k = 1}^\Npc \| \Psi_k \|^2 \int_0^T c_k(t)^2 \, dt},
\quad
i \in \{1, \ldots, \Np\}. 
\end{equation}
Here $\mathcal{K}_i$ is an index set that picks all the terms in the PC
expansion that include $\xi_i$. The definition of this index set is facilitated
by the (partial) tensor product construction of PC basis
functions~\citep{Sudret08,CrestauxLeMaitreMartinez09,Alexanderian13,AlexanderianWinokurSrajEtAl12};
see \ref{sec:Ki} for a brief description.  Note that~\eqref{equ:Si_PCE}
corresponds to $\gStot^U(f; T)$ with $U = \{ i \}$, $i \in \{1, \ldots,
\Np\}$. It is straightforward to generalize the expression for arbitrary $U
\subset \{ 1, \ldots, \Np\}$.

The integrals in~\eqref{equ:Si_PCE} are computed numerically using  
a quadrature formula on $[0, T]$ with nodes and weights
$\{t_m, w_m\}_{m=1}^\Nq$.
This requires computing PC coefficients at every $t_m$, $m \in \{1, \ldots, \Nq\}$. 
The NISP and CS-based approaches for computing PC representation of $f(t, \xx)$ 
share a common feature: a set of function evaluations $f(t_m, \xx^{(j)})$,
$j \in \{ 1, \ldots, N\}$, $m \in \{1, \ldots, \Nq\}$ is needed. This is the main computational
bottleneck for both methods.  

With NISP,  the sampling points are chosen according to a quadrature rule.
The CS-based approach, on the other hand, offers more flexibility and allows for Monte
Carlo or quasi Monte Carlo sampling.  The computational cost of the NISP
numerical quadratures can be very  high, especially with full
tensorization of one-dimensional quadrature rules and/or  when the parameter
dimension is large.  The computational cost can be reduced by carrying out
the integration in~(\ref{equ:Gen_Quadrature}) through Smolyak sparse
quadrature~\citep{Smolyak63,HeissWinschel08}.  A common restriction of both the
NISP and CS-based approaches is the need to access the same set of sampling
points for each $t \in \{t_1, \ldots, t_\Nq\}$. While changing the sampling
points for each time could lead to better approximations, especially if
adaptive quadrature-based approaches are
used~\citep{WinokurConradSrajEtAl13,WinokurKimBisettiEtAl16}, the number of
required function evaluations would be prohibitive.

\begin{algorithm}
\renewcommand{\algorithmicrequire}{\textbf{Input:}}
\renewcommand{\algorithmicensure}{\textbf{Output:}}
\caption{PC-NISP approach for computation of the generalized Sobol' indices}
\label{alg:PCE}
\begin{algorithmic}[1]
\REQUIRE (i) A quadrature formula on $[0, T]$ with nodes and weights $\{ t_m, w_m \}_{m = 1}^\Nq$. 
(ii) a quadrature formula on $\Omega$ with nodes and weights $\{\xx^{(j)}, \nu_j\}_{j=1}^\NqNisp$;
(iii) function evaluations $\{f(t_m, \xx^{(j)})\}$, $m \in \{1, \ldots, \Nq\}$, $j \in \{1, \ldots, \NqNisp\}$;
(iv) a PC basis $\{ \Psi_k \}_{k = 0}^\Npc$.
\ENSURE Approximate generalized total Indices $\tilde{\gen{S}}^i_\text{tot}(f; T)$, $i \in \{1, \ldots, \Np\}$.
\STATE Form the projection matrix
\[
 \Pi_{kj} = \nu_j \Psi_k(\xx^{(j)}) / \ip{\Psi_k}{\Psi_k}, \quad k \in \{0, \ldots, \Npc\}, \,
   j \in \{1, \ldots, \NqNisp\}
\] 
\STATE Compute the vector of PC coefficients at each time step: 
\vspace{-2mm}
\[
\vec{c}(t_m) = \mathbf{\Pi} \vec{d}(t_m), \quad m \in \{1, \ldots, \Nq\}.
\]
\vspace{-2mm}
\STATE Compute approximations to the generalized total sensitivity indices according to~\eqref{equ:Si_PCE}:
\vspace{-2mm}
\[
\tilde{\gen{S}}^{i}_\text{tot}(f; T) =
\frac{\displaystyle\sum_{k \in \mathcal{K}_i} \sum_{m=1}^\Nq \| \Psi_k \|^2 w_m c_k(t_m)^2 }
{\displaystyle\sum_{k = 1}^\Nkl \sum_{m=1}^\Nq \| \Psi_k \|^2 w_m c_k(t_m)^2 }.
\]
\vspace{-2mm}
\end{algorithmic}
\end{algorithm}
To summarize, NISP is a convenient-to-implement approach for computing the
time-dependent PC coefficients and consequently the generalized Sobol' indices,
and can be very effective for certain classes of problems.  We outline the
required steps in Algorithm~\ref{alg:PCE}.

Compared to NISP, CS-based methods present two additional challenges in the above
context: (i) an optimization problem of the
form~\eqref{equ:optim} has to be solved at every $t_m$, $m \in \{1, \ldots,
\Nq\}$ which can be prohibitive when $\Nq$ is large and  (ii) the sparsity
control parameter $\tau$ may need to be calibrated for {\em each}  $t_m$.  CS-based  approaches for the 
joint sparse recovery of function-valued
quantities of interest (such as time-dependent processes) have been proposed; 
they  may provide viable alternatives to a pointwise-in-time
CS-based strategy; see~\citep{DexterTranWebster17} and the references
therein.  

While PC expansions are widely applicable, they present known shortcomings for
certain classes of time-dependent problems; see
e.g.,~\citep{GerritsmaVanderSteenVosEtAl10,AlexanderianLeMaitrNajmEtAl12,
PoetteLucor12, OzenBal17, ChuSudret17} that address problems where
straightforward implementation of a PC-based approach is not optimal.
Depending on the application at hand, other types of surrogates might provide
better alternatives for the purposes of Algorithm~\ref{alg:surrogate}.

\subsection{The spectral approach} \label{sec:spectral}
Even though the approach outlined in Section~\ref{sec:pointwise} can be effective; 
it makes no attempt at exploiting the \emph{structure} of the problem. For instance, and 
as alluded to in the Introduction, computing a KL decomposition  often 
reveals a low-rank representation.  For such cases, the essential features of the corresponding time-dependent processes
are captured with only a few dominant KL modes. Using such a representation,  we can efficiently compute  generalized
Sobol' indices, without the need for surrogate models at every point in time. This is the essence of the  \emph{spectral} 
approach.

Under the notation and assumptions of Section~\ref{sec:var}, 
we represent the process $f$ using the KL expansion
\begin{equation}\label{equ:POD}
   f(t, \xx) = \sum_{i = 1}^\infty f_i(\xx) e_i(t),
\quad 
f_i(\xx) = \int_0^T f(t, \xx) e_i(t) \, dt.
\end{equation}
In practical computations, the above expansion is truncated
\[
\ffn{\Nkl}(t, \xx) = \sum_{i = 1}^\Nkl f_i(\xx) e_i(t), 
\]
with the truncation level $\Nkl$ being informed by the decay of the eigenvalues of $\C$. More precisely, as the variance of the truncated 
KL expansion is given by $\var{\ffn{\Nkl}(t, \xx)} = \sum_{i=1}^\Nkl \lambda_i e_i(t)^2$ 
(cf.\ Lemma~\ref{lem:var}(2) in \ref{sec:proof}) 
it is possible to adjust the truncation level $\Nkl$ by considering the  fraction $r_\Nkl$ of the variance quantified by a given 
truncation level:
\begin{equation}\label{equ:NKL}
   r_\Nkl = \frac{ \int_0^T \var{\ffn{\Nkl}(t, \xx)}\, dt }
                 {\int_0^T \var{f(t, \xx)} \, dt } 
          = \frac{\sum_{i=1}^\Nkl \lambda_i}{\sum_{i=1}^\infty \lambda_i}.
\end{equation}
A similar criterion is used in the computational fluid dynamics community  when truncating
proper orthogonal decompositions (POD) for reduced order modeling \citep{kunisch}. 
The rate at which the eigenvalues of the covariance operator $\C$ decay is problem-dependent. There are, however,
many applications of interest, where the process $f$ corresponds to a
dynamical system with uncertain parameters, for which a small number of KL modes suffice. We call such processes \emph{low-rank}.

\begin{figure}[h]
\centering
\includegraphics[width=1\textwidth]{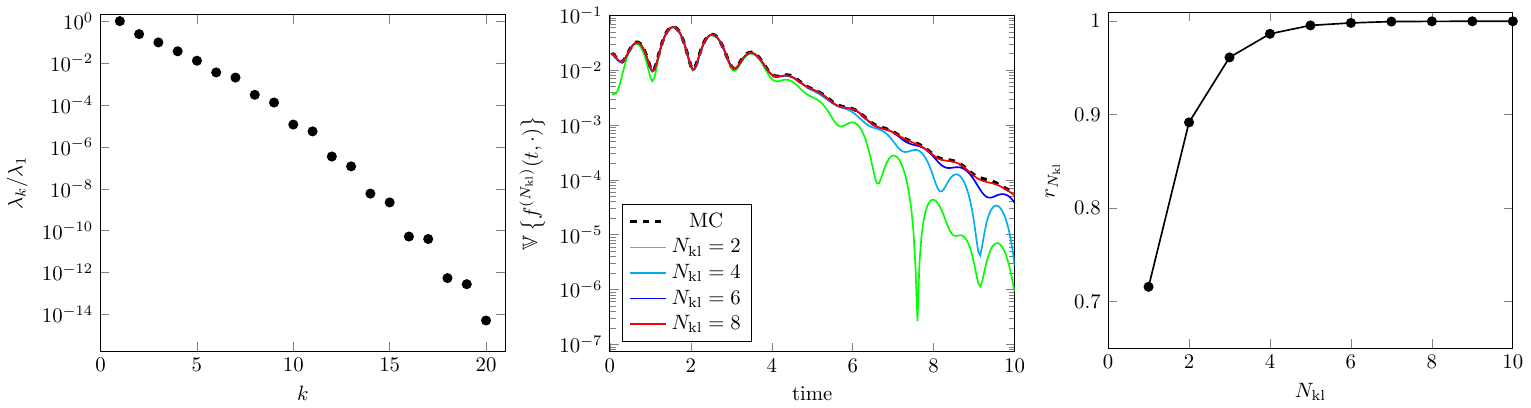}
\caption{Spectral properties of the mechanical oscillator problem \eqref{equ:IVP}. Left: eigenvalues of 
the covariance operator;
middle: pointwise variance of $\ffn{\Nkl}(t, \xx)$ for a few choices of $\Nkl$;  right: the ratio~\eqref{equ:NKL}.}
\label{fig:pendulum_spectrum}
\end{figure}

Figure~\ref{fig:pendulum_spectrum} illustrates the spectral properties of the mechanical oscillator \eqref{equ:IVP}. 
The decay of the first $20$ normalized eigenvalues of the
covariance operator is displayed  in
Figure~\ref{fig:pendulum_spectrum} (left); the rapid decay observed there indicates that a
few KL modes should provide a suitable representation for the process.  For further
insight, we show the evolution of the pointwise variance of $\ffn{\Nkl}(t, \xx)$ for various values of $\Nkl$ (Figure~\ref{fig:pendulum_spectrum}, middle) 
and the behavior of the ratio~\eqref{equ:NKL}
for an increasing number
of KL modes (Figure~\ref{fig:pendulum_spectrum} right).  
The process corresponding to the oscillator problem is an example of 
a low-rank process.
These results are obtained by approximating the covariance function
 according to~\eqref{equ:cov_samp} with a Monte Carlo sample of size
$10^4$.  

Spectral representations can be leveraged to yield efficient algorithms for the
computation of the generalized Sobol' indices. We present the following result that
makes a direct link between the KL expansion of $f$ and the generalized Sobol indices.

\begin{theorem}\label{thm:main}
Let $f$ be a centered process satisfying the assumptions of Section~\ref{sec:var} together with its KL expansion from 
\eqref{equ:POD}.  Then,
for $U \subset \{1, \ldots, \Np\}$,
\begin{equation}\label{equ:spectralSj}
   \gen{S}^U(f; T) =
   \frac{\sum_{i = 1}^\infty  \var{ \E{f_i(\xx) | \xx_U} }}
        {\sum_{i = 1}^\infty \lambda_i},
\end{equation}
where $\lambda_i$ are eigenvalues of the covariance operator $\C$ corresponding 
to the process $f$.
\end{theorem}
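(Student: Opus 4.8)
The plan is to reduce the statement to the integral form of the generalized index already obtained in Proposition~\ref{prp:integrated_indices}, to identify the numerator with the squared $L^2$-norm of the conditional-expectation process $f_U$, and then to let the KL expansion diagonalize that norm.

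First I would record that, since $f$ is centered, $f_U(t,\xu)=\E{f(t,\xx)|\xu}$ is centered as well ($\EE{\xu}{f_U}=\E{f}=0$), so the partial variance is simply
\[
   D^U(f;t)=\EE{\xu}{f_U(t,\xu)^2}=\int_\Omega f_U(t,\xu)^2\,\mu(d\xx),
\]
the last equality holding because $f_U$ depends only on $\xu$ and $\mu$ is a product measure. Integrating in $t$ and invoking Proposition~\ref{prp:integrated_indices} together with the trace computation in its proof (which gives $\int_0^T D(f;t)\,dt=\trace(\C)=\sum_i\lambda_i$), the claim~\eqref{equ:spectralSj} becomes equivalent to
\[
   \int_0^T\!\!\int_\Omega f_U(t,\xu)^2\,\mu(d\xx)\,dt
   =\|f_U\|_{L^2([0,T]\times\Omega)}^2
   =\sum_{i=1}^\infty \var{\E{f_i(\xx)|\xu}}.
\]

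The key step, and the one requiring the most care, is to push the conditional expectation through the KL series~\eqref{equ:POD}. Regarding $f$ as an element of $L^2([0,T];L^2(\Omega))$, the map $P_U := I \otimes \E{\cdot\,|\,\xu}$ that applies $\E{\cdot|\xu}$ in the $\xx$-variable for each fixed $t$ is an orthogonal projection, hence a contraction, and therefore commutes with $L^2$-convergent sums. Since~\eqref{equ:POD} converges in $L^2([0,T]\times\Omega)$, applying $P_U$ term by term yields
\[
   f_U(t,\xu)=\sum_{i=1}^\infty \E{f_i(\xx)|\xu}\,e_i(t),
\]
with convergence in $L^2([0,T]\times\Omega)$. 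I expect the main obstacle to be precisely this interchange of conditional expectation and the infinite sum; it is justified by the boundedness of $P_U$ and the mean-square convergence of the KL expansion, rather than by any pointwise manipulation.

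Finally I would expand the squared norm using the orthonormality of $\{e_i\}$ in $L^2([0,T])$. Writing $g_i(\xu)=\E{f_i(\xx)|\xu}$ and using $\ip{e_i}{e_j}_{L^2([0,T])}=\delta_{ij}$, a Parseval-type computation (analogous to the one in the proof of Proposition~\ref{prp:integrated_indices}) collapses the double sum to its diagonal:
\[
   \|f_U\|_{L^2([0,T]\times\Omega)}^2
   =\sum_{i,j}\ip{e_i}{e_j}_{L^2([0,T])}\,\E{g_i g_j}
   =\sum_{i=1}^\infty \E{g_i^2}.
\]
Because $f$ is centered, $\E{f_i}=\int_0^T f_0(t)e_i(t)\,dt=0$, so $\E{g_i}=\E{\E{f_i|\xu}}=0$ and thus $\E{g_i^2}=\var{g_i}=\var{\E{f_i(\xx)|\xu}}$. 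Substituting this back recovers~\eqref{equ:spectralSj}. Every step apart from the term-by-term projection is a routine Parseval/orthonormality calculation mirroring the denominator computation already carried out for Proposition~\ref{prp:integrated_indices}.
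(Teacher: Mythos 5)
Your proof is correct, but it takes a genuinely different route from the paper's. The paper truncates the KL expansion at level $n$, proves the identity for the finite sum $f^{(n)}$ by the same orthonormality computation you use, and then passes to the limit through two auxiliary lemmas: pointwise-in-$t$ convergence $\var{\E{f^{(n)}(t,\cdot)\,|\,\xx_U}} \to \var{\E{f(t,\cdot)\,|\,\xx_U}}$ (via the contraction property of conditional expectation on $L^2(\Omega)$), followed by the Dominated Convergence Theorem with dominating function $\var{f(t,\cdot)}$, plus a separate lemma handling the denominator. You instead work globally in $L^2([0,T]\times\Omega)$: the single observation that $I\otimes\E{\cdot\,|\,\xx_U}$ is a bounded orthogonal projection lets you push the conditional expectation through the $L^2$-convergent KL series in one step, and the Pythagorean identity for the resulting orthogonal series $\sum_i \E{f_i\,|\,\xx_U}\,e_i$ replaces both the finite-sum computation and the limiting arguments (no truncation, no dominated convergence, no pointwise-in-$t$ lemma). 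The two arguments share the essential mechanism --- orthonormality of $\{e_i\}$ in $L^2([0,T])$ diagonalizing the integrated partial variance, together with $\E{f_i}=0$ turning second moments into variances --- but your route is more streamlined, at the cost of requiring the reader to accept the identification of the slice-wise conditional expectation with the tensor-product projection on $L^2([0,T]\times\Omega)$; the paper's route stays at the level of pointwise-in-$t$ variances, which matches how the indices are defined and computed elsewhere in the text. Both are complete proofs.
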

\begin{proof}
See \ref{sec:proof}.
\end{proof}

Theorem~\ref{thm:main} yields an efficient approach for numerically
approximating the generalized Sobol' indices in problems where the
eigenvalues of the covariance operator exhibit rapid spectral decay; i.e., for 
low-rank processes. 
In such problems, we can obtain accurate approximations to the generalized sensitivity indices with
only a few modes in the KL expansion. Then, focusing on the expression for
$\gen{S}^U(f; T)$, we consider building surrogate models for the individual
modes $f_i(\xx)$; using these, the variances $\var{ \E{f_i(\xx) | \xx_U}}$ can
be approximated efficiently. 

From the approximate covariance function $c^N(s, t)$ in (\ref{equ:cov_samp}), we construct 
the following approximation 
of the 
covariance operator~\eqref{equ:covariance_operator}: 
\[
    \C^N[u](s) =  
    \int_0^T c^N(s, t) u(t) \, dt, \quad u \in L^2([0, T]). 
\]
This operator 
is then discretized using a quadrature formula in the interval $[0, T]$ with nodes and 
weights $t_m$, $w_m$, $m \in \{ 1, \ldots, \Nq\}$.
To compute the spectral
decomposition of $\C^N$ numerically, 
we have to solve the
discretized (generalized) eigenvalue problem 
\[
   \sum_{m = 1}^\Nq w_m c^N(s_l, t_m) e_i(t_m) = \lambda_i e_i(s_l), 
\quad l \in \{1, \ldots, \Nq\}.
\]
Letting $e_i^l = e_i(t_l)$, $K_{lm} = c^N(s_l, t_m)$, $l, m \in \{1, \ldots, \Nq\}$, and defining the matrix
$\mathbf{W} = \operatorname{diag}(w_1, w_2, \ldots, w_n)$, 
the discretized eigenvalue problem is given by
\[
    \mathbf{K}\mathbf{W} \mathbf{e}_i = \lambda_i \mathbf{e}_i, \quad i \in\{ 1, 2, \ldots, \Nq\}.
\]
This can be rewritten in symmetric form
\begin{eqnarray}
    \mathbf{W}^{1/2} \mathbf{K} \mathbf{W}^{1/2} \mathbf{u}_i = \lambda_i \mathbf{u}_i, \quad i \in\{ 1, 2, \ldots, \Nq\}, \label{equ:evp}
\end{eqnarray}
with $\mathbf{u}_i = \mathbf{W}^{1/2}\mathbf{e}_i$. 
Solving this reformulated eigenvalue problem yields eigenvalues $\lambda_i$ and
eigenvectors $\mathbf{e}_i = \mathbf{W}^{-1/2}\mathbf{u}_i$ that satisfy
\[
   \mathbf{e}_i^\tran \mathbf{W} \mathbf{e}_j = \mathbf{u}_i^\tran 
   \mathbf{W}^{-1/2} \mathbf{W} \mathbf{W}^{-1/2} \mathbf{u}_j = \mathbf{u}_i^\tran \mathbf{u}_j = \delta_{ij},
   \quad i, j \in \{1, \ldots, \Nq\}.
\]
The present approach for computing the eigenvalues and
eigenvectors of the covariance operator is known as the Nystr\"{o}m's method.

\begin{algorithm}[h]
\renewcommand{\algorithmicrequire}{\textbf{Input:}}
\renewcommand{\algorithmicensure}{\textbf{Output:}}
\caption{Spectral-KL approach for computing the generalized total Sobol' indices}
\label{alg:samp}
\begin{spacing}{1}
\begin{algorithmic}[1]
\REQUIRE 
(i) A quadrature formula on $[0, T]$ with nodes and weights $\{ t_m, w_m \}_{m = 1}^\Nq$.
(ii) Function evaluations $\{f(t_m, \xx^k)\}$, $l \in \{1, \ldots, \Nq\}$, 
     $k \in \{1, \ldots, N\}$. 
(iii) An index set $U \subset \{1, \ldots, P\}$.

\ENSURE Generalized Sobol' index $\gen{S}^U(f; T)$.

\STATE Center the process
\vspace{-2mm}
\[
    f_c(t_m, \xx^k) = f(t_m, \xx^k) - \frac1N\sum_{j=1}^N f(t_m, \xx^j),
    \quad k \in \{1, \ldots, N\}, m \in \{1, \ldots, \Nq\}.
\]
\vspace{-2mm}
\STATE Form covariance matrix (discretized covariance function)
\vspace{-2mm}
\[
 K_{lm} =  \frac{1}{N-1} \sum_{k = 1}^{N} f_c(t_l, \xx^k) f_c(t_m, \xx^k), \quad l,m \in \{1, \ldots, \Nq\}.
\]
\vspace{-2mm}
\STATE Let $\mathbf{W} = \mathrm{diag}(w_1, w_2, \ldots, w_\Nq)$ and solve the eigenvalue problem
\vspace{-2mm}
\[
 \mathbf{W}^{1/2} \mathbf{K} \mathbf{W}^{1/2} \mathbf{u}_i = \lambda_i \mathbf{u}_i,
\quad 
 i \in \{ 1, \ldots, \Nq\}. 
\]
\vspace{-4mm}
\STATE Compute $\mathbf{e}_i = \mathbf{W}^{-1/2} \mathbf{u}_i$, $i \in \{ 1, \ldots, \Nq\}$.
\STATE Choose a truncation level $\Nkl$, and compute the discretized KL modes, 
\vspace{-2mm}
\[
{f}_i(\xx^k) = \sum_{m = 1}^\Nq w_m f_c(t_m, \xx^k) e_i^m, \quad i \in \{1, \ldots, \Nkl\},
\, k \in \{1, \ldots, N\}.
\]
\vspace{-2mm}
\STATE Compute a surrogate model for each $f_i$, using function evaluations $\{f_i(\xx^k)\}_{k=1}^N$:
\vspace{-2mm}
\[
   f_i(\xx) \approx \tilde{f}_i(\xx; \xx^1, \xx^2, \ldots, \xx^N).
\]
\vspace{-2mm}
\STATE Compute
\[
\tilde{\gen{S}}^U(f; T) = 
   \frac{\sum_{i = 1}^\Nkl  \var{ \E{\tilde{f}_i(\xx) | \xx_U} }}
        {\sum_{i = 1}^\Nkl \lambda_i}.
\]
\end{algorithmic}
\end{spacing}
\end{algorithm}

Forming the KL expansion requires computing the $f_i$ in~\eqref{equ:POD}; we 
do so via quadrature
\begin{equation}\label{equ:disc_mode}
    {f}_i(\xx) = \sum_{m = 1}^\Nq w_m f(t_m, \xx) e_i(t_m). 
\end{equation}
We can now form the ensemble $\{ f(t_m, \xx^k) \}_{k = 1}^N$ for 
$m \in\{ 1, \ldots, \Nq\}$ and
use it to compute a surrogate model for each mode $f_i$: 
\[
   f_i(\xx) \approx \tilde{f}_i(\xx; \xx^1, \xx^2, \ldots, \xx^N).
\]
This enables efficient approximation of the generalized sensitivity
indices via
\begin{eqnarray}
\gen{S}^U(f; T) \approx \tilde{\gen{S}}^U(f; T) 
:=   \frac{\sum_{i = 1}^\Nkl  \var{ \E{\tilde{f}_i(\xx) | \xx_U} }}
        {\sum_{i = 1}^\Nkl \lambda_i}, \quad U \subset \{1, \ldots, p\}. \label{equ:approxsig}
\end{eqnarray}
The accuracy of the approximation (\ref{equ:approxsig}) depends on (i) the truncation level in the KL expansion, (ii) the accuracy of the 
temporal quadrature in $[0,T]$, (iii) the quality of the sampling in parameter space and (iv)  the error in surrogate model construction.
The various steps of the presented numerical approach for approximating
the generalized sensitivity indices are summarized in Algorithm~\ref{alg:samp}.

\subsection{Implementation of the spectral approach}  \label{sec:computational}
Here, we discuss computational considerations that are
important when implementing Algorithm~\ref{alg:samp}.
\subsubsection{Approximation of the covariance function and the eigenvalue problem}
How sensitive is the discretized eigenvalue problem (\ref{equ:evp}) to the 
number of samples 
$N$ used to construct the approximate covariance function $c^N$? 
We explore this issue numerically.  As an initial test,
Figure~\ref{fig:pendulum_spectrum_conv} (left) displays the first $20$
(normalized) eigenvalues of the approximate covariance operator corresponding to
the oscillator example  \eqref{equ:IVP} as the size of Monte Carlo sample is
varied. We see that even a small Monte Carlo sample (in the order of hundreds)
is sufficient to capture the dominant eigenvalues of $\C$ for this problem.
This is akin to the experiences from the computation of active
subspaces~\citep{Constantine15} where the dominant eigenvalues of a
covariance-like operator are considered.  The impact of using approximate
covariance functions on computation of spectral properties of the covariance
operator is further investigated in Section~\ref{sec:cholera_GSA}.

It is also possible to approximate the covariance function via
quadrature, instead of Monte Carlo,  in the uncertain parameter space.
Performing quadrature is generally challenging for problems with
high-dimensional uncertain parameters; for such problems, full-tensor or
(non-adaptive) sparse grid constructions can be computationally prohibitive 
due to the curse of dimensionality. However, for problems where the
use of a suitable quadrature formula  is feasible, this approach is  preferable
as it yields accurate results. In a quadrature based approach, the sample
average approximations in steps 1 and 2 of Algorithm~\ref{alg:samp} are
replaced by appropriate quadrature formulas. 
We compare the results of computing the eigenvalues of the covariance operator
using a large Monte Carlo sample against a quadrature formula in
Figure~\ref{fig:pendulum_spectrum_conv} (right). 

We also mention that in numerical implementations, the discretized eigenvalue
probelem~\eqref{equ:evp} can be solved efficiently through Krylov iterative
methods. For example we can employ the Lanczos method~\citep{TrefethenBau97} to
compute the dominant eigenpairs of the discretized covariance operator.
Alternatively, one can use randomized methods such as the randomized SVD
algorithm~\citep{HalkoMartinssonTropp11}.

\begin{figure}[h]
\centering
\includegraphics[width=.9\textwidth]{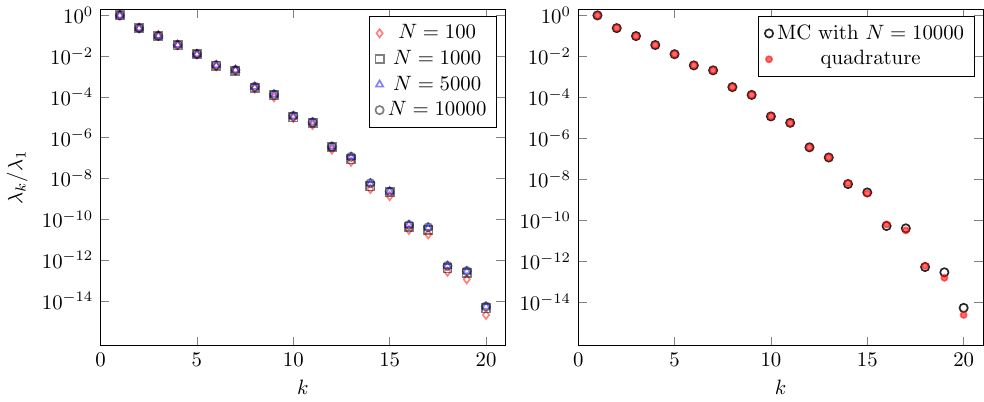}
\caption{Eigenvalues of the discretized covariance operator for the oscillator example~\eqref{equ:IVP}.
Left: influence of the number of Monte Carlo samples $N$ on spectrum; 
right: comparison of the eigenvalues with the covariance function 
approximated via Monte Carlo sampling with 
$10^4$ samples (black circles) and through 
a fully tensorized Gauss-Legendre quadrature in the parameter space with $10^3$
nodes (solid red dots).}
\label{fig:pendulum_spectrum_conv}
\end{figure}

\subsubsection{Polynomial surrogates for the KL modes \eqref{equ:disc_mode}}
Conditional expectations $\E{\tilde{f}_i(\xx) | \xx_U}$ can easily be computed from the 
 PC representation for $f_i(\xx)$
\[
    f_i(\xx) \approx \tilde{f}_i(\xx) =  \sum_{k = 0}^\Npc c_k^i \Psi_k(\xx),
\]
using the tensor product construction of the PC basis, see
e.g.,~\citep{Sudret08}.  This enables efficient computation of 
the generalized Sobol' indices.
For example, 
$\tilde{\gen{S}}^U(f; T)$ with $U = \{ j \}$, $j\in \{1, \ldots, \Np\}$ 
can be approximated as follows:
\[
\gen{S}^j(f; T) 
\approx
\tilde{\gen{S}}^j(f; T) =  
   \frac{\sum_{i = 1}^\Nkl  \sum_{k \in \mathcal{I}_j} \| \Psi_k \|^2 (c_k^i)^2 }
        {\sum_{i = 1}^\Nkl \lambda_i}, \quad j \in \{1, \ldots, \Np\}.
\]
where $\mathcal{I}_j$ 
is an index set that picks all the terms in the PC
expansion that include only $\xi_i$.
The computation of the PC expansion coefficients for $f_i$ themselves can be
done through a CS-based approach or NISP as outlined earlier. 
Note also that the generalized total Sobol' indices can be approximated through 
$\tilde{\gen{S}}^j_\text{tot}(f; T) = 1 - \tilde{\gen{S}}^{U_{\sim j}}(f; T)$ 
with $U_{\sim j} = \{1, \ldots, \Npc\} \setminus \{ j \}$.

In Figure~\ref{fig:pendulum_barplot_compare}, we compare the performance of
several options within Algorithm~\ref{alg:samp}.  We present results using a
quadrature based approach, where  we approximate the covariance function via
quadrature, and compute PC representations for $f_i$ via NISP; specifically, we
consider a full-tensor quadrature formula and a Smolyak sparse
quadrature formula (see the Figure caption for more details). 
We also use Algorithm~\ref{alg:samp} with a small Monte Carlo
sample in the uncertain parameter space, where we compute the covariance
function via sample averaging, and compute the PC representations of $f_i$
using the CS-based approach.  Moreover, we report the generalized sensitivity
indices computed using direct Monte Carlo sampling with $10^5$ samples. Results from
all approaches agree remarkably well.

\begin{figure}[h]
\centering
\includegraphics[width=.5\textwidth]{./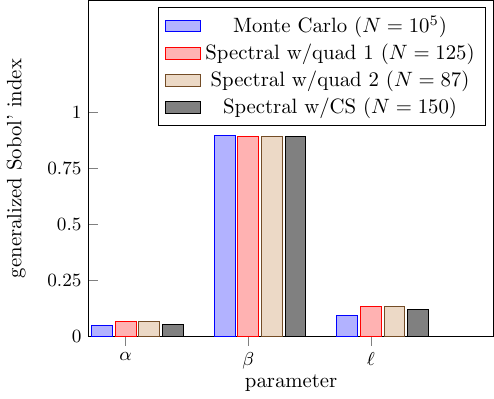}
\caption{Generalized Sobol' indices computed via Algorithm~\ref{alg:samp} for the mechanical oscillator  example~\eqref{equ:IVP};  {\em spectral w/quad 1:}  full tensor Gauss-Legendre
quadrature with five nodes in each dimension, {\em spectral w/quad 2:}  Smolyak sparse grid  based on delayed Kronrod--Patterson 
rule~\citep{petras:2003,HeissWinschel08} and {\em spectral CS:} Monte Carlo sample of size $150$. In each case, 
a fourth order PC expansion for $f_i$, $i = 1, \ldots, \Nkl$, is computed ($\Nkl = 8$).}
\label{fig:pendulum_barplot_compare}
\end{figure}

It is important to note that Algorithm~\ref{alg:samp} utilizes  
the surrogate models $\tilde{f}_i$ to approximate the conditional 
expectations in the numerator of~\eqref{equ:spectralSj}, as a means
of approximating the generalized Sobol' indices. This is not the same as
computing the exact 
generalized Sobol' indices of the approximate KL expansion, 
$\tilde{f}(t, \xx) = \tilde{f}_0(t) + \sum_{i=1}^\Nkl \tilde{f}_i(\xx) e_i(t)$.
Another alternative approach for computing the generalized 
Sobol' indices is provided by sampling the approximate truncated KL 
expansion of $f$, as explained below. 

\subsubsection{The approximate KL expansion as a global surrogate model}
The computations performed in Algorithm~\ref{alg:samp} lead to 
an approximate KL representation of $f$,
\begin{equation}\label{equ:global}
    f(t_m, \xx) \approx \tilde{f}(t_m, \xx) := 
   \tilde{f}_0(t_m) + \sum_{i=1}^\Nkl \tilde{f}_i(\xx) e_i^m, \quad 
    m \in \{1, \ldots, \Nq\}, 
\end{equation}
where $e_i^m$ is as in Algorithm~\ref{alg:samp}, and $\tilde{f}_0(t_m)$ is the
sample mean, at $t = t_m$, computed in Algorithm~\ref{alg:samp}. This provides a
cheap-to-evaluate surrogate model, which can be used for an alternative
approach of approximating generalized Sobol' indices via sampling $\tilde{f}(t_m, \xx)$. 
The utility of this surrogate model, however,
extends beyond sensitivity analysis: $\tilde{f}(t, \xx)$ can be used to accelerate various
uncertainty quantification tasks, where repeated evaluations of $f(t, \xx)$ are
required.
We point out that a related approach was implemented 
in~\citep{LiIskandaraniLeHenaffEtAl16} for representation of spatially 
distributed processes.

%
%
\section{Probabilistic modeling and sensitivity analysis for a cholera model}\label{sec:cholera}
\def\ve{{\varepsilon}}
\def\ds{\displaystyle}

We illustrate attributes of the generalized sensitivity indices in the
context of a cholera model proposed in \citep{Hartley}.  

\subsection{Model description}
A population of $N_\text{pop}$ subjects is split, at time $t$, into $S(t)$
susceptible individuals, $I(t)$ infectious individuals, and $R(t)$ recovered
individuals; the model assumes the total population $N_\text{pop}$ to stay
constant while $S$, $I$ and $R$ vary during an epidemic with $N_\text{pop} =
S(t) + I(t)+R(t)$.  Also  considered are concentrations $B_H(t)$ and $B_L(t)$
of highly- and lowly-infectious cholera bacteria, {\em Vibrio cholerae}.  The
units for these five state variables are compiled in Table~\ref{table1}. We
illustrate the associated compartment model in Figure~\ref{fig:SIR_comp}.

\vspace{2mm}
\begin{minipage}{\textwidth}
\begin{minipage}[b]{0.32\textwidth}
\centering
\includegraphics[width=.7\textwidth]{./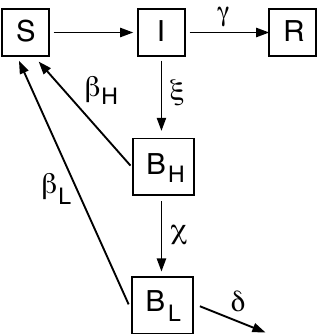}
\captionof{figure}{Compartmental cholera model 
from~\citep{Hartley}.} 
\label{fig:SIR_comp}
\end{minipage}
\begin{minipage}[b]{0.65\textwidth}
\centering
\begin{tabular}{l|c|c} \hline
State & Symbol & Units \\ \hline
Susceptible Individuals & $S$ & \# individuals \\
Infected Individuals & $I$ &  \# individuals \\
Recovered Individuals & $R$ &  \# individuals \\
Concentration of highly-infectious  & $ B_H$ & 
   $\frac{\mbox{\rm \# bacteria}}{m \ell}$ \\
\quad cholera bacteria &&\\
Concentration of lowly-infectious & $ B_L$ & $\frac{\mbox{\rm \# bacteria}}{m \ell}$ \\
\quad cholera bacteria &&\\\hline
\end{tabular}
\captionof{table}{State variables and units for the cholera model.}
\label{table1}
\end{minipage}
\end{minipage}
\vspace{2mm}

The cholera model in \citep{Hartley}
is based on the following assumptions. 
(i) The birth and death rates are identical and denoted by $b$.
(ii) Susceptible individuals become infected by drinking bacteria-infested
water. The rate at which this happens is proportional to $S(t)$, the
concentrations $B_H$ and $B_L$ of highly and lowly-infectious bacteria, and the
drinking rates $\beta_H$ and $\beta_L$ at which these bacteria are ingested.
The rates also satisfy the saturation relations that when $B_H = \kappa_H$ and
$B_L = \kappa_L$, where $\kappa_H$ and $\kappa_L$ denote cholera carrying
capacities, the probability of ingestion resulting in disease is 0.5.
Susceptibles recover at a rate $\gamma$. 
(iii)  Infected individuals spread highly-infectious bacteria $B_H$ to the water
at a rate $\zeta$.
(iv) Highly-infectious bacteria $B_H$ become lowly infectious
$B_L$ at a rate $\chi$.
(v) Lowly-infectious bacteria $B_L$ die at a rate
$\delta$.

These assumptions yield the system of ordinary differential equations (ODEs) 
\begin{equation}
\begin{array}{l}
  {\ds \frac{dS}{dt} = bN_\text{pop} - \beta_L S \frac{B_L}{\kappa_L + B_L} - \beta_H S \frac{B_H}{\kappa_H + B_H}  - b S} \\
  \noalign{\medskip}
  {\ds \frac{dI}{dt} =  \beta_L S \frac{B_L}{\kappa_L + B_L} + \beta_H S \frac{B_H}{\kappa_H + B_H} - (\gamma + b)I} \\
  \noalign{\medskip}
  {\ds \frac{dR}{dt} = \gamma I - b R} \\
  \noalign{\medskip}
  {\ds \frac{d B_H}{d t} = \zeta I - \chi B_H} \\
  \noalign{\medskip}
  {\ds \frac{d B_L}{dt} = \chi B_H - \delta B_L}
\end{array}
\label{eq1}
\end{equation}
with initial conditions 
$(S(0), I(0), R(0), B_H(0), B_L(0)) = (S_0, I_0, R_0, B_{H_0},  B_{L_0})$.

The parameter units and nominal values from \citep{Hartley} are compiled in
Table~\ref{table2}.  We note that $\frac{d S}{d t} + \frac{d I}{d t} +\frac{d
R}{d t} = 0$ so that $S(t) + I(t) + R(t) = N_\text{pop}$ and the population size indeed remains
constant.  The system dynamics are illustrated  in Figure~\ref{fig:SIR_model}.

Our simulations correspond to a total population of $N_\text{pop}=10{,}000$ with
initial states given by  $S_0 = N_\text{pop} - 1$, $I_0 = 1$, $R_0 = 0$, and $B_{H_0} =
B_{L_0} = 0$. We solve the problem up to time  $T = 250$.
The ODE system is integrated using the solver \verb+ode45+ provided in
\textsc{Matlab} ODE toolbox. We use absolute and relative tolerances of $10^{-6}$ for
the ODE solver.  The solution is recorded at 
$t_i = i \Delta t$, $i \in \{0, \ldots, \Nq\}$, with $\Delta t = 5\times10^{-2}$ and $\Nq
= 250 / \Delta t$. The temporal integrals  from
Algorithms~\ref{alg:PCE}~and~\ref{alg:samp} are evaluated through the composite trapezoidal
rule, where the quadrature nodes are the time steps $\{ t_i \}_{i = 0}^\Nq$.

\renewcommand{\arraystretch}{1.1}
\begin{table}[!tt]
\begin{center}
\begin{tabular}{l|c|c|c} \hline
Model Parameter & Symbol & Units & Values \\ \hline
Rate of drinking $B_L$ cholera & $\beta_L$ & $\frac{1}{\mbox{\rm week}}$ & 1.5 \\
Rate of drinking $B_H$ cholera & $\beta_H$ & $\frac{1}{\mbox{\rm week}}$  & 7.5~($\ast$) \\
$B_L$ cholera carrying capacity & $\kappa_L$ & $\frac{\mbox{\rm \# bacteria}}{m\ell}$ & $10^6$  \\
$B_H$ cholera carrying capacity & $\kappa_H$ & $\frac{\mbox{\rm \# bacteria}}{m\ell}$ & $\frac{\kappa_L}{700}$ \\
Human birth and death rate & $b$ & $\frac{1}{\mbox{\rm week}}$  & $\frac{1}{1560}$ \\
Rate of decay from $B_H$ to $B_L$ & $\chi$ & $\frac{1}{\mbox{\rm week}}$ & $\frac{168}{5}$ \\
Rate at which infectious individuals & $\zeta$  & $\frac{\mbox{\rm \# bacteria}}{\mbox{\rm \# individuals} \cdot m \ell \cdot \mbox{\rm week}}$ & 70 \\ \noalign{\vspace{-.1truein}}
spread $B_H$ bacteria to water & & & \\
Death rate of $B_L$ cholera & $\delta$ & $\frac{1}{\mbox{\rm week}}$ & $\frac{7}{30}$ \\
Rate of recovery from cholera & $\gamma$ & $\frac{1}{\mbox{\rm week}}$ & $\frac{7}{5}$ \\ \hline
\end{tabular}
\end{center}
\renewcommand{\arraystretch}{1.0}
\caption{Cholera model parameters from \citep{Hartley}. \\ 
($\ast$) The value  $\beta_H = 7.5$  is consistent with \citep{Hartley} where it is assumed that $\beta_H > \beta_L$; no corresponding nominal value for $\beta_H$ was, however, provided there. }
\label{table2}
\end{table}

\begin{figure}[ht]
\centering
\includegraphics[width=.8\textwidth]{./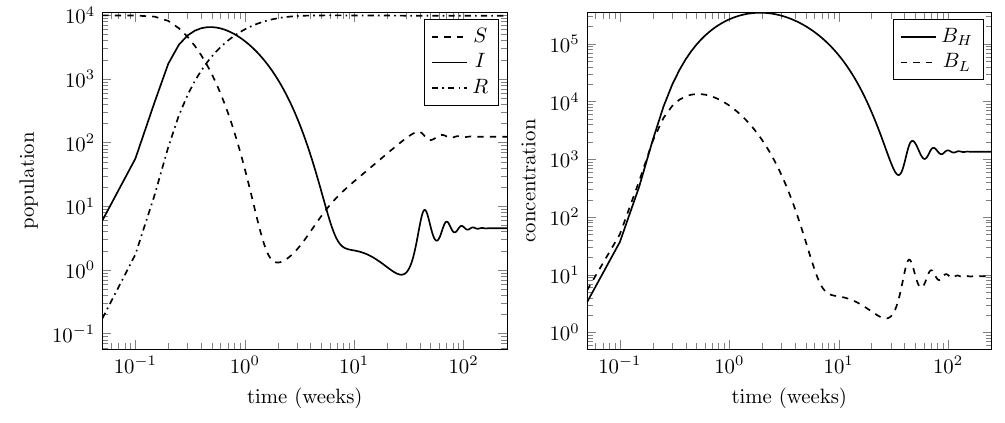}
\caption{
Cholera model (\ref{eq1}):  time evolution of $S$, $I$, $R$, $B_H$, and $B_L$.}
\label{fig:SIR_model}
\end{figure}

\subsection{Statistical model for uncertain model parameters and the quantity of interest}
The parameter vector $\vec{x} = (\beta_L, \beta_H, \kappa_L, b, \chi, \zeta,
\delta, \gamma)$ is considered as uncertain. The nominal values $\bar{\vec{x}}$ for these
parameters are specified in Table~\ref{table2}.  The distribution of
these uncertain parameters is taken as uniform, with $10\%$ perturbation around
the respective nominal values:
\begin{equation}\label{equ:cholera_parameterization}
   x_i = \bar{x}_i + 0.1 \bar{x}_i \xi_i, \quad \xi_i \sim \mathcal U(-1,1), \quad
   i \in \{1, \ldots, \Np\}, \quad
    \Np = 8.
\end{equation}
In~\citep{Hartley}, the nominal value of $\kappa_H$ is taken as 
${\kappa_L}/{700}$.  Hence, we set
 $\kappa_H = x_3 / 700$, where $x_3$ is as
in~\eqref{equ:cholera_parameterization}.

Our quantity of interest is the infected population $I$ as a function
of time.  Since the vector $\vec{x}$ of the uncertain model parameters is
defined by the random vector $\xx$ in~\eqref{equ:cholera_parameterization}, we
can consider the infected population as a process $I(t, \xx)$.

\subsection{Global sensitivity analysis}\label{sec:cholera_GSA} 
The traditional total Sobol' indices, computed pointwise in time, are displayed
in Figure~\ref{fig:all_indices} (left).  These indices, which show great
variation over time, are difficult to interpret.  This is mainly due to fact
that the variance of the quantity of interest, infected population, itself
varies significantly over time; see Figure~\ref{fig:pointwise_var}.
\begin{figure}[ht]\centering
\includegraphics[width=1\textwidth]{./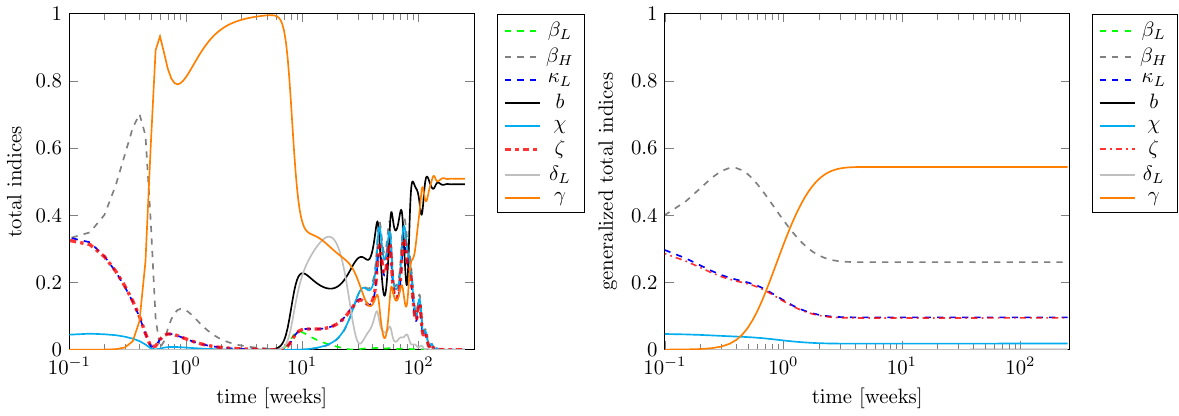} 
\caption{Sensitivity analysis of the cholera model (\ref{eq1}). Left: pointwise in time total sobol indices; right:
generalized total indices over successively larger time-intervals.}
\label{fig:all_indices}
\end{figure}

\begin{figure}[ht]\centering
\includegraphics[width=0.4\textwidth]{./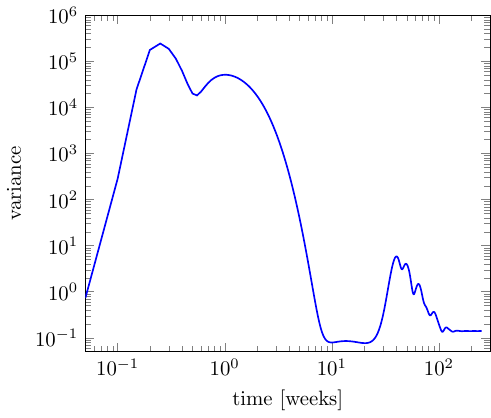}
\caption{Pointwise variance over time.}
\label{fig:pointwise_var}
\end{figure}

By contrast,  the generalized total Sobol' indices offer a more robust
and meaningful picture; see  Figure~\ref{fig:all_indices}~(right), where generalized total
indices are computed over successively larger time intervals and
Figure~\ref{fig:cholera_sensitivity} that reports the generalized Sobol' indices
corresponding to the entire simulation time interval.   
Note, for example, the behavior of the variable $b$: based on pointwise-in-time
Sobol' indices in Figure~\ref{fig:all_indices}~(left), one might make the conclusion 
that $b$ is an important variable. However, with Figure~\ref{fig:pointwise_var} in mind, 
we note that $b$ becomes a major 
contributor to output variance around the end of the simulation time, when the
model variance is at most $\mathcal{O}(1)$. In contrast, contribution of $b$
to model variance is negeligible 
earlier in simulation time when the model variance is $\mathcal{O}(10^5)$. 
The generalized total Sobol' indices, which are aware of the history of the
process, incorporate this and accordingly rank $b$ is an unimportant variable.

\begin{figure}[ht]\centering
\includegraphics[width=1\textwidth]{./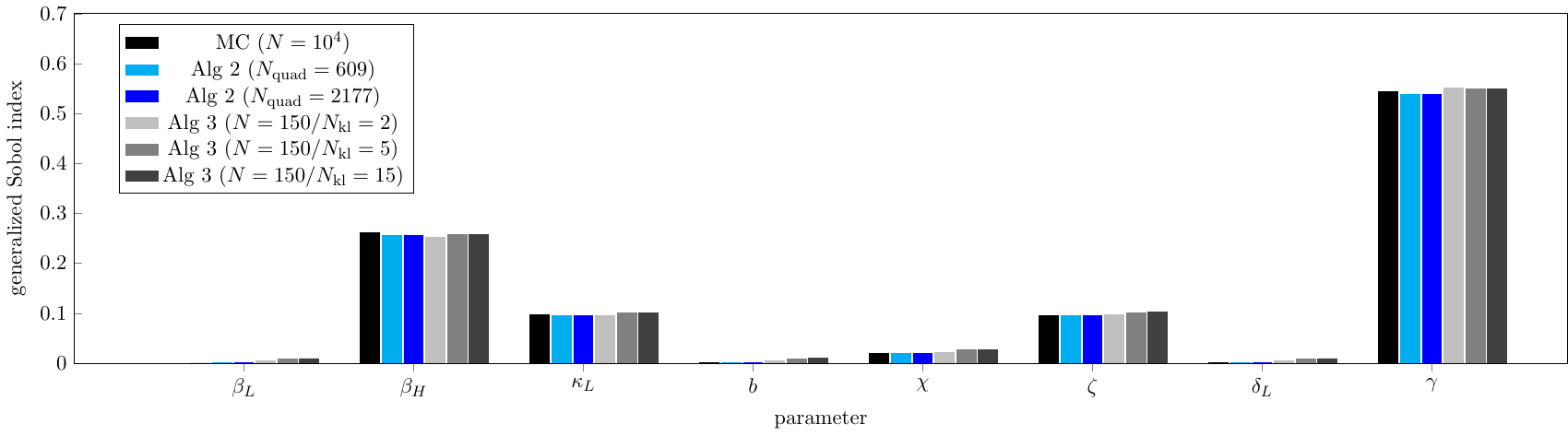}
\caption{Generalized total Sobol' indices for the cholera model (\ref{eq1}) with $T = 250$, computed six different ways.}
\label{fig:cholera_sensitivity}
\end{figure}

While the generalized
indices are computed six different ways, as we now detail, a strong consistency can
be observed in the result. 
When computing the generalized Sobol' indices via
Algorithm~\ref{alg:PCE}, we use a third-order PC expansion of $I(t, \xx)$,
for each $t$ in the time grid. The PC coefficients are computed via NISP with
sparse quadrature of varying resolutions.  The  results in
Figure~\ref{fig:cholera_sensitivity}  indicate
that with approximately six hundred model evaluations, it is possible to construct a PC surrogate
model that is  suitable for accurate estimation of the generalized Sobol' indices. 

For Algorithm~\ref{alg:samp}, we use Monte Carlo sampling to approximate the covariance function and rely on the
CS-based approach from Section~\ref{sec:pointwise} to approximate the
fourth-order PC coefficients of the KL modes $f_i$ of the random process. 
The approach only requires a small number of model evaluations
to accurately estimate the Sobol' indices for this problem. 

A key component of 
Algorithm~\ref{alg:samp} is the spectral decomposition of the covariance operator
of the process.  
 Figure~\ref{fig:spectrum_var} (left) displays the normalized eigenvalues of
the covariance operator as the Monte Carlo sample size increases. We
note both a rapid spectral decay and the fact that a small number of Monte Carlo
samples is sufficient to accurately estimate the dominant eigenvalues.  
As shown in Figure~\ref{fig:spectrum_var} (middle), the evolution of the pointwise covariance, computed using a truncated KL expansion,
 can be quantified accurately with a small number of KL modes. Indeed, $\Nkl = 15$ modes is sufficient; in fact, 
only  two KL
modes can be used in the interval $[0, 1]$; i.e., during the transient regime.  

The computation in Figure~\ref{fig:spectrum_var} (middle) uses a fixed Monte
Carlo sample of size $N = 10^4$. From the results in
Figure~\ref{fig:spectrum_var} (left), we know that a much smaller Monte Carlo 
sample is sufficient for approximating the dominant eigenvalues. However, the 
computation of the pointwise variance depends also on approximation of the
eigenvectors of the covariance operator. Instead of performing convergence
studies for the dominant eigenvectors, we consider the following question:  how does the
approximation of the pointwise variance change for $\Nkl = 15$ if we use
smaller Monte Carlo samples? This is investigated in
Figure~\ref{fig:spectrum_var} (right) which confirms that a small Monte Carlo sample
enables accurate estimation of the pointwise variance, as computed by 
an approximate truncated KL expansion. 
\begin{figure}[h]
\includegraphics[width=1\textwidth]{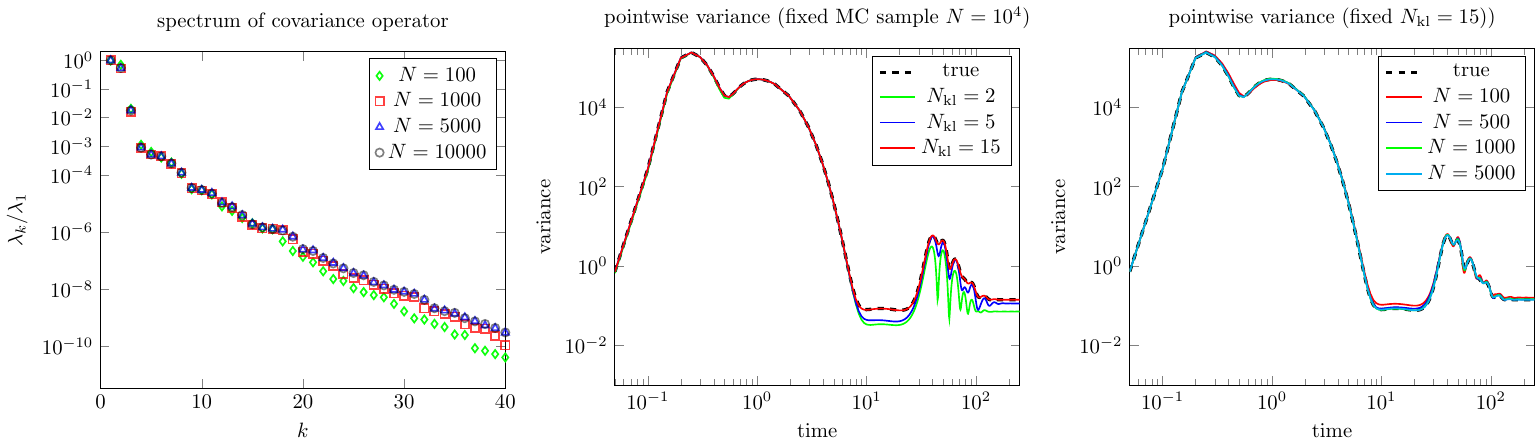}
\caption{Spectral properties and variance analysis of the cholera model (\ref{eq1}). Left: spectrum of the sampled covariance operator from Section~\ref{sec:spectral};  middle: 
pointwise variance of the truncated process $\ffn{\Nkl}(t, \xx)$ with $N = 10^4$ samples to 
approximate the covariance function.
Right: pointwise variance of the truncated process $\ffn{\Nkl}(t, \xx)$
with $\Nkl = 15$ and with varying Monte Carlo sample sizes used to approximate the covariance
function. }
\label{fig:spectrum_var}
\end{figure}

\subsection{Generalized Sobol' indices for parameter dimension reduction}
Based on generalized Sobol' indices on the interval $[0, 250]$, the important
variables are $\beta_H$, $\kappa_L$, $\zeta$, and $\gamma$.  This suggests that we can reduce the parameter dimension by fixing the remaining variables at their
nominal values. We provide next a numerical study of the approximation
errors which result from fixing inessential variables. To illustrate the potential pitfalls of fixing parameters based
on pointwise in time classical Sobol' indices, we also consider fixing parameters
according to the classical Sobol' indices at $t = 250$, which indicate $b$ and
$\gamma$ as the important parameters.

When fixing inessential variables, we consider the reduced model
\[
   \tilde{I}_u(t, \xx) = I(t, \tilde\xx), \quad \text{with }\tilde\xx 
      = (\xx_U, \xx_{U^\complement}^\text{nom}), 
   \quad U \subseteq X = \{1, \ldots, \Np\}.
\]
To provide a thorough study, we examine the impact of fixing 
parameters on pointwise variance and the distribution of the process.

\begin{figure}[ht]\centering
\includegraphics[width=0.32\textwidth]{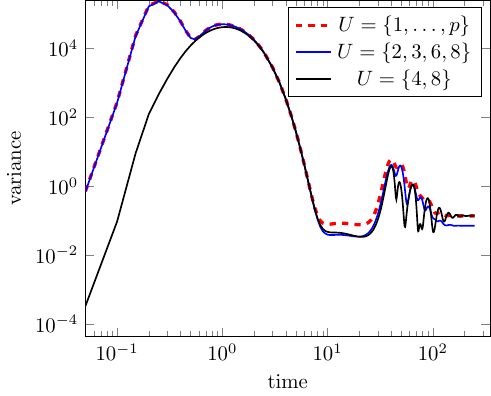}
\includegraphics[width=0.32\textwidth]{./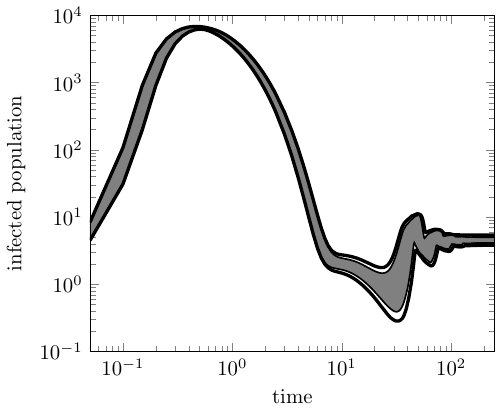}
\includegraphics[width=0.32\textwidth]{./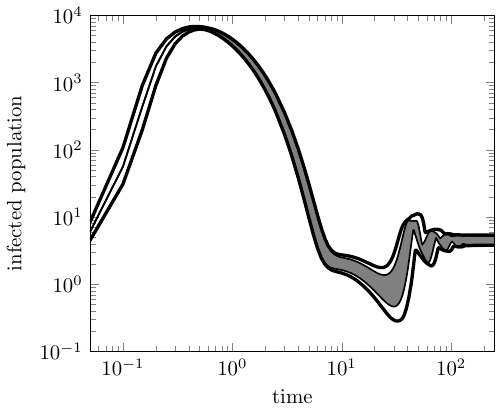}
\caption{Effect of fixing inessential variables for the cholera model (\ref{eq1}).
Left: effect on the variance;
middle: effect on the quantity of interest when using generalized indices; 
right: effect on the quantity of interest when using  pointwise
indices at the final time. }
\label{fig:fixing_var}
\end{figure}

Figure~\ref{fig:fixing_var} (left) illustrates the impact of fixing inessential variables on the
variance of the process over time.  The effect  on the distribution of the process itself is studied in
Figures~\ref{fig:fixing_var}~(middle) and (right), where
important variables are chosen based on generalized Sobol' indices and  based on
pointwise Sobol' indices at the final time, respectively.  The thick black
lines  indicate the
$2$nd and $98$th percentiles obtained by sampling $I(t, \xx)$ (with no
variables fixed) $10^4$ times. The shaded regions enclose the respective
percentiles for $\tilde{I}(t, \xx)$, which is also obtained by sampling the
reduced models $10^4$ times. 

We note that fixing variables according to generalized indices results in a
reduced model that captures the distribution of $I$ over the simulation time
window well. On the other hand, and as expected, fixing variables according to
pointwise Sobol' indices at the final time is effective at capturing the
distribution of $I$ only as the system approaches equilibrium.
In Figure~\ref{fig:fixing_var_dist_composite},
we study the impact of fixing variables on the probability density
function (PDF) of the infected population over time. These
PDFs were generated by sampling the reduced models $10^4$ times.
\begin{figure}[ht!]
\includegraphics[width=0.95\textwidth]{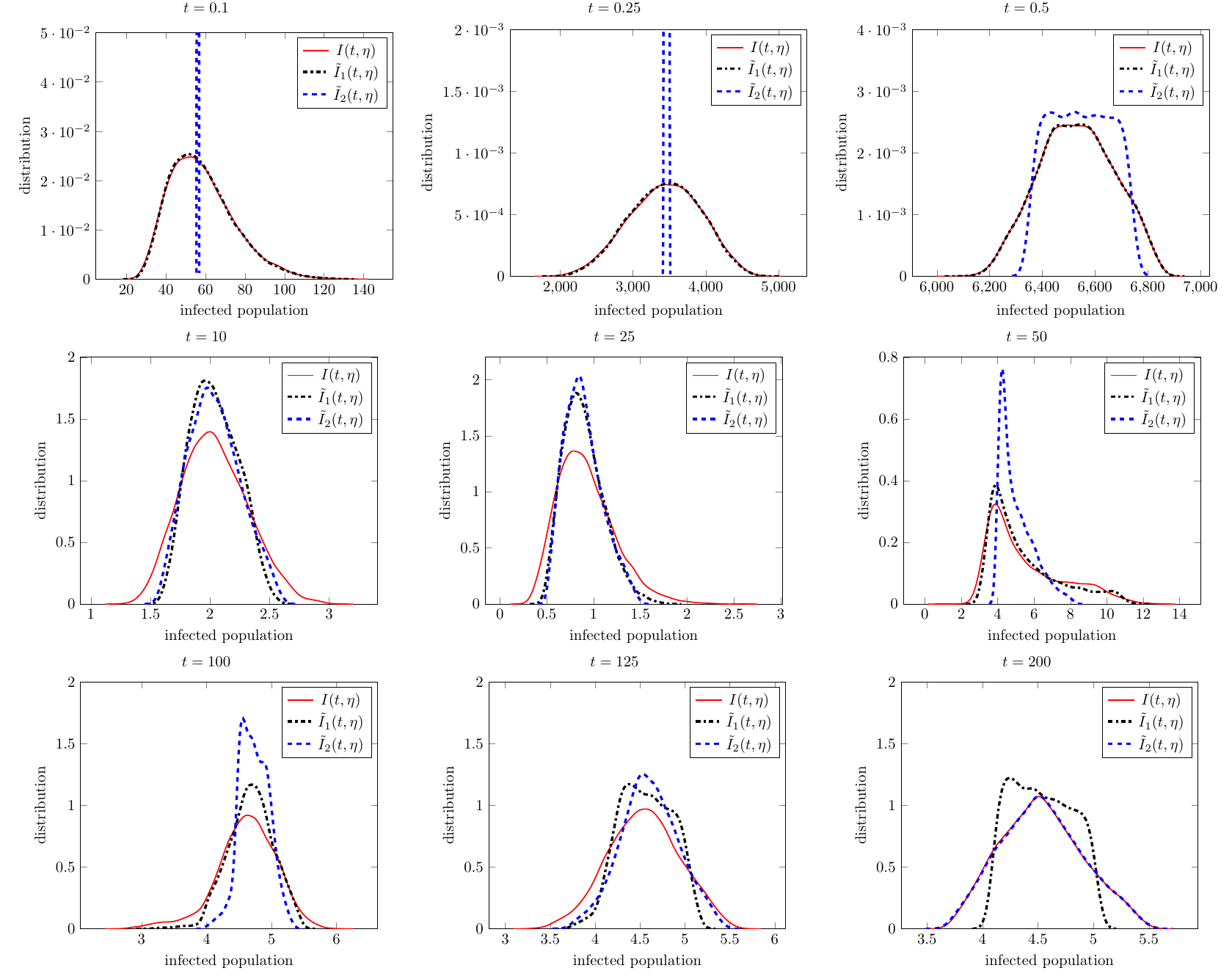}
\caption{Effect of fixing inessential variables, 
chosen according to generalized indices (indicated as $\tilde{I}_1$)
and Sobol' indices at final time (indicated as $\tilde{I}_2$)
on distribution of the 
infected population for the cholera model (\ref{eq1}).}
\label{fig:fixing_var_dist_composite}
\end{figure}

\section{Conclusions}
The global sensitivity analysis of time-dependent processes such as
(\ref{equ:basic_model}) requires history-aware approaches. Not surprisingly,
identifying inessential parameters  based on a pointwise in time analysis, such
as the one corresponding to the standard Sobol' indices, is only valid close to
the time at which the analysis is performed. For applications where the {\em
evolution} of the process under study is of interest, sensitivity analysis must
be performed globally in time. 

We show how to efficiently compute generalized Sobol' indices.
The various tests of the impact of fixing inessential parameters provide a
consistent picture: using generalized total Sobol' indices, we can reliably
select the variables with dominant impact on variability of the quantity of
interest. Further formalization of these results in the form of, for instance,
theoretical error analysis is to our knowledge not available but is desirable.
Likewise, the global sensitivity analysis of  time-dependent processes with
correlated parameters is beyond this work and deserves further investigation.

\section*{Acknowledgements} 
The research of PAG was supported in part by the National Science Foundation
through grants DMS-1522765 and DMS-1745654. The research of RCS was supported
in part by the Air Force Office of Scientific Research (AFOSR) through grant
AFOSR~FA9550-15-1-0299.

\bibliographystyle{elsarticle-num}
\bibliography{refs}

\begin{thebibliography}{10}
\expandafter\ifx\csname url\endcsname\relax
  \def\url#1{\texttt{#1}}\fi
\expandafter\ifx\csname urlprefix\endcsname\relax\def\urlprefix{URL }\fi
\expandafter\ifx\csname href\endcsname\relax
  \def\href#1#2{#2} \def\path#1{#1}\fi

\bibitem{Sobol93}
I.~{Sobol'}, Sensitivity estimates for nonlinear mathematical models, Math.
  Modeling Comput. Experiment 1 (1993) 407--414.

\bibitem{Sobol01}
I.~Sobol', Global sensitivity indices for nonlinear mathematical models and
  their {Monte Carlo} estimates, Mathematics and Computers in Simulation 55
  (2001) 271--280.

\bibitem{Owen13}
A.~{Owen}, Better estimation of small {Sobol'} sensitivity indices, ACM
  Transactions on Modeling and Computer Simulation 23 (2013) Art. 11, 17.

\bibitem{SaltelliRattoAndresEtAl08}
A.~{Saltelli}, M.~{Ratto}, T.~{Andres}, F.~{Campolongo}, J.~{Cariboni},
  D.~{Gatelli}, M.~{Saisana}, S.~{Tarantola}, Global sensitivity analysis: the
  primer, Wiley, 2008.

\bibitem{AlexanderianWinokurSrajEtAl12}
A.~Alexanderian, J.~Winokur, I.~Sraj, A.~Srinivasan, M.~Iskandarani, W.~C.
  Thacker, O.~M. Knio, Global sensitivity analysis in an ocean general
  circulation model: a sparse spectral projection approach, Computational
  Geosciences 16~(3) (2012) 757--778.

\bibitem{Namhata2016OladyshkinDilmoreEtAl16}
A.~Namhata, S.~Oladyshkin, R.~M. Dilmore, L.~Zhang, D.~V. Nakles, Probabilistic
  assessment of above zone pressure predictions at a geologic carbon storage
  site, Scientific reports 6 (2016) 39536.

\bibitem{GamboaJanonKleinEtAl14}
F.~Gamboa, A.~Janon, T.~Klein, A.~Lagnoux, Sensitivity analysis for
  multidimensional and functional outputs, Electronic Journal of Statistics
  8~(1) (2014) 575--603.

\bibitem{LamboniMonodMakowski11}
M.~Lamboni, H.~Monod, D.~Makowski, Multivariate sensitivity analysis to measure
  global contribution of input factors in dynamic models, Reliability
  Engineering \& System Safety 96~(4) (2011) 450--459.

\bibitem{Wiener:1938}
N.~Wiener, The {Homogeneous Chaos}, American Journal of Mathematics 60 (1938)
  897--936.

\bibitem{LeMaitreKnio10}
O.~{Le Ma{\^\i}tre}, O.~Knio, Spectral Methods for Uncertainty Quantification
  With Applications to Computational Fluid Dynamics, Scientific Computation,
  Springer, 2010.

\bibitem{Xiu10}
D.~Xiu, Numerical methods for stochastic computations: a spectral method
  approach, Princeton University Press, 2010.

\bibitem{friedman93}
J.~{Friedman}, Fast {MARS}, Tech. Rep. 110, Laboratory for Computational
  Statistics, Department of Statistics, Stanford University (1993).

\bibitem{CrestauxLeMaitreMartinez09}
T.~Crestaux, O.~L. Maitre, J.-M. Martinez, Polynomial chaos expansion for
  sensitivity analysis, Reliability Engineering \& System Safety 94~(7) (2009)
  1161--1172, special Issue on Sensitivity Analysis.

\bibitem{Sudret08}
B.~Sudret, Global sensitivity analysis using polynomial chaos expansions,
  Reliability Engineering \& System Safety 93~(7) (2008) 964--979.

\bibitem{BlatmanSudret10}
G.~Blatman, B.~Sudret, Efficient computation of global sensitivity indices
  using sparse polynomial chaos expansions, Reliability Engineering \& System
  Safety 95~(11) (2010) 1216--1229.

\bibitem{Alexanderian13}
A.~Alexandrian, On spectral methods for variance based sensitivity analysis,
  Probability Surveys 10 (2013) 51--68.

\bibitem{HartAlexanderianGremaud17}
J.~Hart, A.~Alexanderian, P.~Gremaud, Efficient computation of {Sobol'} indices
  for stochastic models, SIAM Journal on Scientific Computing 39 (2017)
  A1514--A1530.

\bibitem{kleijnen}
J.~{Kleijnen}, W.~{van Beers}, Kriging for interpolation in random simulations,
  Journal of the Operational Research Society 54 (2003) 255--262.

\bibitem{legratiet}
L.~{Le Gratiet}, C.~{Cannamela}, B.~{Iooss}, A bayesian approach for global
  sensitivity analysis of (multifidelity) computer codes, SIAM/ASA Journal on
  Uncertainty Quantification 2 (2014) 336--363.

\bibitem{CampbellMcKayWilliams06}
K.~Campbell, M.~D. McKay, B.~J. Williams, Sensitivity analysis when model
  outputs are functions, Reliability Engineering \& System Safety 91~(10-11)
  (2006) 1468--1472.

\bibitem{HongLuyi16}
H.~Xiao, L.~Li,
  \href{http://www.sciencedirect.com/science/article/pii/S0951832015002975}{Discussion
  of paper by {Matieyendou Lamboni, Herv\'{e} Monod, David Makowski}
  ``{M}ultivariate sensitivity analysis to measure global contribution of input
  factors in dynamic models'', {R}eliab. {E}ng. {S}yst. {S}af. 99 (2011)
  450--459}, Reliability Engineering \& System Safety 147 (2016) 194 -- 195.
\newblock \href {https://doi.org/https://doi.org/10.1016/j.ress.2015.10.015}
  {\path{doi:https://doi.org/10.1016/j.ress.2015.10.015}}.
\newline\urlprefix\url{http://www.sciencedirect.com/science/article/pii/S0951832015002975}

\bibitem{Pronzato19}
L.~Pronzato, Sensitivity analysis via karhunen--lo{\`e}ve expansion of a random
  field model: Estimation of sobol’indices and experimental design,
  Reliability Engineering \& System Safety 187 (2019) 93--109.

\bibitem{HsingEubank15}
T.~Hsing, R.~Eubank, Theoretical foundations of functional data analysis, with
  an introduction to linear operators, John Wiley \& Sons, 2015.

\bibitem{Adler10}
R.~J. Adler, The geometry of random fields, SIAM, 2010.

\bibitem{DaPratoZabczyk14}
G.~Da~Prato, J.~Zabczyk, Stochastic equations in infinite dimensions,
  {Cambridge university press}, 2014.

\bibitem{big_Rudin}
W.~Rudin, Real and complex analysis, Tata McGraw-Hill Education, 1987.

\bibitem{Mercer1909}
J.~Mercer, Functions of positive and negative type, and their connection with
  the theory of integral equations, Philosophical Transactions of the Royal
  Society of London. Series A, Containing Papers of a Mathematical or Physical
  Character (1909) 415--446.

\bibitem{Lax02}
P.~D. Lax, Functional Analysis, John Wiley \& Sons, New-York, Chicester,
  Brisbane, Toronto, 2002.

\bibitem{hg}
J.~Hart, P.~Gremaud, An approximation theoretic perspective of {Sobol'} indices
  with dependent variables, International Journal for Uncertainty
  Quantification 8 (2018) 483--493.

\bibitem{SobolTarantolaGatelliEtAl06}
I.~Sobol', S.~Tarantola, D.~Gatelli, S.~Kucherenko, W.~Mauntz, Estimating the
  approximation error when fixing unessential factors in global sensitivity
  analysis, Reliability Engineering \& System Safety 92~(7) (2007) 957--960.

\bibitem{Ghanem:1991a}
R.~Ghanem, P.~Spanos, Stochastic Finite Elements: A Spectral Approach, Dover,
  2002, 2nd edition.

\bibitem{AlexanderianLeMaitrNajmEtAl12}
A.~Alexanderian, O.~L. Ma\^{i}tre, H.~Najm, M.~Iskandarani, O.~Knio, Multiscale
  stochastic preconditioners in non-intrusive spectral projection, Journal of
  Scientific Computing 50 (2012) 306--340.

\bibitem{ConradMarzouk13}
P.~R. Conrad, Y.~M. Marzouk, Adaptive {S}molyak pseudospectral approximations,
  SIAM Journal on Scientific Computing 35~(6) (2013) A2643--A2670.

\bibitem{WinokurConradSrajEtAl13}
J.~Winokur, P.~Conrad, I.~Sraj, O.~Knio, A.~Srinivasan, W.~C. Thacker,
  Y.~Marzouk, M.~Iskandarani, A priori testing of sparse adaptive polynomial
  chaos expansions using an ocean general circulation model database,
  Computational Geosciences 17~(6) (2013) 899--911.

\bibitem{WinokurKimBisettiEtAl16}
J.~Winokur, D.~Kim, F.~Bisetti, O.~P. Le~Ma{\^\i}tre, O.~M. Knio, Sparse pseudo
  spectral projection methods with directional adaptation for uncertainty
  quantification, Journal of Scientific Computing 68~(2) (2016) 596--623.

\bibitem{YanGuoXiu12}
L.~Yan, L.~Guo, D.~Xiu, Stochastic collocation algorithms using
  $\ell_1$-minimization, International Journal for Uncertainty Quantification
  2~(3) (2012).

\bibitem{HamptonDoostan16}
J.~Hampton, A.~Doostan, Compressive sampling methods for sparse polynomial
  chaos expansions, Handbook of Uncertainty Quantification (2017) 827--855.

\bibitem{FajraouiMarelliSudret17}
N.~Fajraoui, S.~Marelli, B.~Sudret, Sequential design of experiment for sparse
  polynomial chaos expansions, SIAM/ASA Journal on Uncertainty Quantification
  5~(1) (2017) 1061--1085.

\bibitem{DiazDoostanHampton18}
P.~Diaz, A.~Doostan, J.~Hampton, Sparse polynomial chaos expansions via
  compressed sensing and d-optimal design, Computer Methods in Applied
  Mechanics and Engineering 336 (2018) 640 -- 666.
\newblock \href {https://doi.org/https://doi.org/10.1016/j.cma.2018.03.020}
  {\path{doi:https://doi.org/10.1016/j.cma.2018.03.020}}.

\bibitem{spgl1:2007}
E.~van~den Berg, M.~P. Friedlander, "spgl1": A solver for large-scale sparse
  reconstruction, \url{http://www.cs.ubc.ca/labs/scl/spgl1}.

\bibitem{Smolyak63}
S.~Smolyak, Quadrature and interpolation formulas for tensor products of
  certain classes of functions, Dokl. Akad. Nauk SSSR 4 (1963) 240--243.

\bibitem{HeissWinschel08}
F.~Heiss, V.~Winschel, Likelihood approximation by numerical integration on
  sparse grids, Journal of Econometrics 144~(1) (2008) 62--80.

\bibitem{DexterTranWebster17}
N.~Dexter, H.~Tran, C.~Webster, \href{https://arxiv.org/abs/1711.02591}{On the
  strong convergence of forward-backward splitting in reconstructing jointly
  sparse signals}, arXiv preprint arXiv:1711.02591 (2017).
\newline\urlprefix\url{https://arxiv.org/abs/1711.02591}

\bibitem{GerritsmaVanderSteenVosEtAl10}
M.~Gerritsma, J.-B. Van~der Steen, P.~Vos, G.~Karniadakis, Time-dependent
  generalized polynomial chaos, Journal of Computational Physics 229~(22)
  (2010) 8333--8363.

\bibitem{PoetteLucor12}
G.~Po{\"e}tte, D.~Lucor, Non intrusive iterative stochastic spectral
  representation with application to compressible gas dynamics, Journal of
  Computational Physics 231~(9) (2012) 3587--3609.

\bibitem{OzenBal17}
H.~C. Ozen, G.~Bal, A dynamical polynomial chaos approach for long-time
  evolution of {SPDEs}, Journal of Computational Physics 343 (2017) 300--323.

\bibitem{ChuSudret17}
C.~V. Mai, B.~Sudret, Surrogate models for oscillatory systems using sparse
  polynomial chaos expansions and stochastic time warping, SIAM/ASA Journal on
  Uncertainty Quantification 5~(1) (2017) 540--571.

\bibitem{kunisch}
K.~{Kunisch}, S.~{Volkwein}, Galerkin proper orthogonal decomposition methods
  for a general equation in fluid dynamics, SIAM Journal on Numerical Analysis
  40 (2002) 492--515.

\bibitem{Constantine15}
P.~G. Constantine, Active subspaces: Emerging ideas for dimension reduction in
  parameter studies, SIAM, 2015.

\bibitem{TrefethenBau97}
L.~N. Trefethen, D.~Bau~III, Numerical linear algebra, Vol.~50, SIAM, 1997.

\bibitem{HalkoMartinssonTropp11}
N.~Halko, P.-G. Martinsson, J.~A. Tropp, Finding structure with randomness:
  Probabilistic algorithms for constructing approximate matrix decompositions,
  SIAM review 53~(2) (2011) 217--288.

\bibitem{petras:2003}
K.~Petras, Smolyak cubature of given polynomial degree with few nodes for
  increasing dimension, Numerische Mathematik 93~(4) (2003) 729--753.

\bibitem{LiIskandaraniLeHenaffEtAl16}
G.~Li, M.~Iskandarani, M.~Le~H{\'e}naff, J.~Winokur, O.~P. Le~Ma{\^\i}tre,
  O.~M. Knio, Quantifying initial and wind forcing uncertainties in the gulf of
  mexico, Computational Geosciences 20~(5) (2016) 1133--1153.

\bibitem{Hartley}
D.~M. Hartley, J.~G. Morris~Jr, D.~L. Smith, Hyperinfectivity: a critical
  element in the ability of v. cholerae to cause epidemics?, PLoS Med 3~(1)
  (2005) e7.

\end{thebibliography}

%
%
\appendix
\section{Definition of the index set $\mathcal{K}_i$}\label{sec:Ki}
Here we briefly explain the definition of the index set $\mathcal{K}_i$
in~\eqref{equ:Si_PCE}.
We use a multivariate PC basis that is 
obtained through partial tensorization of appropriate
univariate bases.  More precisely, if we denote by $\{\psi_j(\xi_i)\}_{j =
1}^\infty$ the univariate orthogonal polynomial basis corresponding to
$\xi_i$, we
form the multivariate PC basis $\{\Psi_k\}_{k = 0}^\infty$ as follows: 
\begin{equation}\label{equ:multivariatePC}
	\Psi_k(\xx) = \prod_{i=1}^{\Np} \psi_{\alpha_i^k}(\xi_i), \qquad \xx \in \Omega, \,
                      k = 0, 1, 2, \ldots, 
\end{equation}
where $\vec\alpha^k = \left( \alpha_1^k, \alpha_2^k, \cdots, \alpha_\Np^k\right)$, 
with $\alpha_k^i \geq 0$ for $i = 1, \ldots, \Np$, is
a multi-index, and $\alpha_i^k$ indicates the order of the 
univariate polynomials in
$\xi_i$.  In the present context, where 
$\xi_i$ is uniformly distributed,  $\psi_{\alpha_i^k}$
is the Legendre polynomial of order $\alpha_i^k$.
An indexing scheme for the 
multi-indices $\{\vec\alpha^k\}_{k = 0}^{\Np}$, which is convenient for 
computer implementations, can be found for instance in~\citep[Appendix C.]{LeMaitreKnio10}.
The multivariate orthogonal polynomial basis functions defined
in~\eqref{equ:multivariatePC} are used in the PC expansion of 
$f(t, \xx)$ in~\eqref{equ:PCE}.

The index set $\mathcal{K}_i$ in~\eqref{equ:Si_PCE}, which
picks all the terms in the PC
expansion that include $\xi_i$, is given by
$\mathcal{K}_i = \big\{ k \in  \{1, \ldots, \Npc\} : \alpha_k^i > 0\big\}$. 

\section{Proof of Theorem~\ref{thm:main}}\label{sec:proof}
First we establish some technical lemmas.
\begin{lem}\label{lem:var}
Let $f$ be a centered process  satisfying the assumptions of Section~\ref{sec:var} 
and let $\C$ be its covariance
operator with eigenpairs $\{(\lambda_i, e_i)\}_{i=1}^\infty$. The following hold:
\begin{enumerate}
\item  $\E{\fn} = 0$,
\item  $\var{\fn(t, \xx)} = \sum_{i = 1}^n \lambda_i e_i(t)^2$,
\item  We have
\[
   \lim_{n \to \infty} \int_0^T \var{ \fn(t, \xx)} \, dt = \int_0^T \var{f(t, \xx)} \, dt.
\]
\end{enumerate}
\end{lem}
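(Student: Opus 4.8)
The plan is to establish the three claims in sequence, the middle one carrying the analytic weight. The common engine is the orthogonality of the Karhunen--Lo\`eve coefficients, $\E{f_i(\xx)f_j(\xx)} = \lambda_i\delta_{ij}$, which I would derive once and reuse. For part~(1), I would interchange the expectation with the time integral defining the coefficient $f_i(\xx)=\int_0^T f(t,\xx)e_i(t)\,dt$, so that $\E{f_i(\xx)} = \int_0^T \E{f(t,\xx)}\,e_i(t)\,dt = \int_0^T f_0(t)e_i(t)\,dt = 0$ since $f$ is centered; summing over the finitely many modes gives $\E{\fn}=0$ at once.

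For part~(2), part~(1) makes the variance equal to the second moment, so $\var{\fn(t,\xx)} = \sum_{i,j=1}^n \E{f_i(\xx)f_j(\xx)}\,e_i(t)e_j(t)$. The crux is the orthogonality relation: writing $f_i(\xx)f_j(\xx)$ as a double time integral and interchanging with the expectation gives
\[
\E{f_i(\xx)f_j(\xx)} = \int_0^T\!\!\int_0^T c(s,t)\,e_i(s)e_j(t)\,ds\,dt = \int_0^T \C[e_i](t)\,e_j(t)\,dt = \lambda_i \langle e_i, e_j\rangle = \lambda_i\delta_{ij},
\]
using the definition~\eqref{equ:covariance_operator} of $\C$, the eigenrelation $\C[e_i]=\lambda_i e_i$, and orthonormality of the $e_i$. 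Substituting collapses the double sum onto its diagonal and yields $\var{\fn(t,\xx)}=\sum_{i=1}^n \lambda_i e_i(t)^2$.

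For part~(3), integrating the formula from part~(2) over $[0,T]$ and using $\int_0^T e_i(t)^2\,dt = 1$ gives $\int_0^T \var{\fn(t,\xx)}\,dt = \sum_{i=1}^n \lambda_i$, which tends to $\trace(\C)$ as $n\to\infty$. On the other hand $\int_0^T \var{f(t,\xx)}\,dt = \int_0^T c(t,t)\,dt = \trace(\C)$ by exactly the Mercer-plus-monotone-convergence computation already performed in the proof of Proposition~\ref{prp:integrated_indices}. Since both sides equal the trace, the limit in~(3) follows.

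I expect the main obstacle to be not conceptual but the rigorous justification of the Fubini/Tonelli interchanges in parts~(1) and~(2). These rest on the jointly measurable modification of $f$ supplied by the mean-square-continuity assumption (the Remark in Section~\ref{sec:var}) together with $f\in L^2([0,T]\times\Omega)$; for part~(2) I would first apply Tonelli to $|f(s,\xx)f(t,\xx)\,e_i(s)e_j(t)|$ to secure absolute integrability before invoking Fubini, with finiteness of $\int_0^T c(t,t)\,dt$ guaranteed by Mercer's theorem.
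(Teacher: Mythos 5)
Your proposal is correct and follows essentially the same route as the paper: part (2) via the orthogonality relation $\E{f_i f_j}=\lambda_i\delta_{ij}$ collapsing the double sum, and part (3) by integrating, using orthonormality of the $e_i$, and identifying $\int_0^T c(t,t)\,dt=\trace(\C)$ through Mercer's theorem. You merely make explicit two steps the paper leaves implicit (the derivation of the orthogonality relation from the eigenrelation, and the Fubini/Tonelli justifications), which is a harmless refinement rather than a different argument.
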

\begin{proof}
The first statement is clear. The second statement is seen as follows:
\begin{multline}
\var{\fn(t, \xx)} = \E{ \fn(t, \xx)^2}
= \E{\left(\sum_{i = 1}^n f_i(\xx) e_i(t)\right)
     \left( \sum_{j = 1}^n f_j(\xx) e_j(t)\right)}
\\
= \sum_{i,j} \E{ f_i f_j} e_i(t) e_j(t) = \sum_{i = 1}^n \lambda_i e_i(t)^2.
\end{multline}
Finally, 
the third statement is derived as follows:
\begin{multline*}
\lim_{n \to \infty} \int_0^T \var{ \fn(t, \xx)} 
=\lim_{n \to \infty} \int_0^T \sum_{i = 1}^n \lambda_i e_i(t)^2 \, dt 
=\lim_{n \to \infty} \sum_{i = 1}^n \lambda_i \, dt\\ 
= \trace(\C) = 
\int_0^T c(t, t) \, dt = \int_0^T \var{f(t, \xx)}\, dt,
\end{multline*}
where the penultimate equality follows from Mercer's Theorem.
\end{proof}
\begin{lem}\label{lem:varj}
Let $f$ be a centered process  satisfying the assumptions of Section~\ref{sec:var}.
Then, for $U \subset \{1, \ldots, \Np\}$,
\[
\lim_{n \to \infty} \var{ \E{ \fn(t, \xx) | \xx_U }} = \var{ \E{ f(t, \xx) | \xx_U}}.
\]
\end{lem}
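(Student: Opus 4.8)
The plan is to reduce the claim to an $L^2(\Omega)$ convergence at each fixed $t$ and then to pass that convergence through the conditional-expectation operator, finally reinterpreting the resulting norms as variances. First I would fix $t \in [0,T]$ and note that, since the modes satisfy $\E{f_i f_j} = \lambda_i \delta_{ij}$ (as established in the proof of Lemma~\ref{lem:var}), the truncation error has the explicit form
\[
\big\| f(t, \cdot) - \fn(t, \cdot) \big\|_{L^2(\Omega)}^2 = \E{\Big( \sum_{i > n} f_i(\xx) e_i(t) \Big)^2} = \sum_{i > n} \lambda_i e_i(t)^2.
\]
By Mercer's theorem the full series $\sum_{i=1}^\infty \lambda_i e_i(t)^2$ converges to $c(t,t) < \infty$, so its tail tends to zero; hence $\fn(t, \cdot) \to f(t, \cdot)$ in $L^2(\Omega)$ for each fixed $t$.

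Next I would invoke the fact that the conditional expectation $g \mapsto \E{g \mid \xx_U}$ is the orthogonal projection of $L^2(\Omega)$ onto the closed subspace of square-integrable functions depending only on $\xx_U$; in particular it is a contraction. Applying it to $f(t,\cdot) - \fn(t,\cdot)$ and using linearity gives
\[
\big\| \E{f(t, \xx) \mid \xx_U} - \E{\fn(t, \xx) \mid \xx_U} \big\|_{L^2(\Omega)} \le \big\| f(t, \cdot) - \fn(t, \cdot) \big\|_{L^2(\Omega)} \longrightarrow 0.
\]

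Finally I would convert these $L^2(\Omega)$ norms into variances. Since $f$ is centered and $\E{\fn} = 0$ by Lemma~\ref{lem:var}(1), the tower property gives $\E{\E{f(t,\xx) \mid \xx_U}} = 0$ and likewise for $\fn$; therefore each variance coincides with the corresponding squared $L^2(\Omega)$ norm, e.g.\ $\var{\E{f(t,\xx)\mid\xx_U}} = \big\|\E{f(t,\xx)\mid\xx_U}\big\|_{L^2(\Omega)}^2$. The convergence of variances then follows from continuity of the norm (the reverse triangle inequality) applied to the convergence established in the previous step.

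The computations are routine; the single point requiring care is the passage through the conditioning operator, where the contraction property of the orthogonal projection $\E{\cdot \mid \xx_U}$ is precisely what guarantees that the $L^2(\Omega)$ convergence of $\fn(t,\cdot)$ is inherited by the conditional expectations. Everything else is bookkeeping, and no uniformity in $t$ is required since the statement is pointwise in $t$.
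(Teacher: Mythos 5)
Your proposal is correct and follows essentially the same route as the paper: fix $t$, obtain $L^2(\Omega)$ convergence of $\fn(t,\cdot)$ to $f(t,\cdot)$, push it through the conditional expectation, and identify variances with squared norms via centering. The paper simply asserts the $L^2$ convergence and the stability under conditioning where you spell out the Mercer tail estimate and the contraction property of $\E{\cdot \mid \xx_U}$, so your write-up is a slightly more detailed version of the same argument.
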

\begin{proof}
Let $t \in [0, T]$ be fixed but arbitrary. 
Since $\fn(t, \xx) \to f(t, \xx)$ in $L^2(\Omega)$, by properties of conditional expectation
$\E{ \fn(t, \xx) | \xx_U } \to \E{ f(t, \xx) | \xx_U }$ in $L^2(\Omega)$.
Next, noting that, $\E{\E{ \fn(t, \xx) | \xx_U}} = 0$, we
obtain
\[
\left| \var{  \E{ \fn(t, \xx) | \xx_U} } - \var{ \E{ f(t, \xx) | \xx_U} }\right|
= 
\left| \norm{  \E{ \fn(t, \xx) | \xx_U} }_{L^2(\Omega)}^2 - \norm{ \E{ f(t, \xx) | \xx_U} }_{L^2(\Omega)}^2\right| \to 0, 
\]
as $n \to \infty$.
\end{proof}

\textit{Proof of Theorem~\ref{thm:main}}.
We begin by considering $\lim_{n \to \infty} \int_0^T \var{ \E{\fn(t, \xx) | \xx_U}} \, dt$.
We have $\var{ \E{\fn(t, \xx) | \xx_U}} \leq \var{\fn(t, \xx)} 
\leq \var{f(t, \xx)}$, and $\var{f(t, \xx)} \in L^2(0, T)$. 
Therefore, by Lemma~\ref{lem:varj} and 
the Lebesgue Dominated 
Convergence Theorem,  
\[
   \lim_{n \to \infty} \int_0^T \var{ \E{\fn(t, \xx) | \xx_U}} \, dt
   =
   \int_0^T \var{ \E{ f(t, \xx) | \xx_U}} \, dt. 
\]
This, along with Lemma~\ref{lem:var}(3), yields
\begin{equation}\label{equ:lim_main}
\lim_{n \to \infty}
\frac{ \int_0^T \var{ \E{\fn(t, \xx) | \xx_U}} \, dt}
     { \int_0^T \var{ \fn(t, \xx) }\, dt}
= 
\frac{ \int_0^T \var{ \E{f(t, \xx) | \xx_U}} \, dt}
     { \int_0^T \var{ f(t, \xx) }\, dt}
=
\gen{S}^U(f; T).
\end{equation}

Next, note that,
$\E{ \fn(t, \xx) | \xx_U} 
  = \E{ \sum_{i=1}^n f_i(\xx) e_i(t) | \xx_U}
  = \sum_{i=1}^n \E{f_i(\xx) | \xx_U} e_i(t)$.
Then, we proceed as follows: 
\begin{align*}
\int_0^T \var{ \E{ \fn(t, \xx) | \xx_U}} \, dt 
&= \int_0^T \var{\sum_{i=1}^n \E{f_i(\xx) | \xx_U} e_i(t)} \, dt
\\
&= \int_0^T \E{\Big(\sum_{i=1}^n \E{f_i(\xx) | \xx_U} e_i(t)\Big)^2} \, dt
\\
&= \E{\int_0^T \Big(\sum_{i=1}^n \E{f_i(\xx) | \xx_U} e_i(t)\Big)^2 \, dt} 
\\
&= \E{\sum_{i,k=1}^n \E{f_i(\xx) | \xx_U} \E{f_k(\xx) | \xx_U} \int_0^T e_i(t)e_k(t)\, dt} 
\\
&= \E{\sum_{i=1}^n \E{f_i(\xx) | \xx_U}^2 } 
= \sum_{i = 1}^n \var{ \E{f_i(\xx) | \xx_U} }.
\end{align*}
Hence, using $\int_0^T \var{ \fn(t, \xx) }\, dt = \sum_{i=1}^n \lambda_i$, and~\eqref{equ:lim_main} yields
\[
\gen{S}^U(f; T) = 
\lim_{n\to\infty} 
\frac{ \int_0^T \var{ \E{\fn(t, \xx) | \xx_U}} \, dt}
     { \int_0^T \var{ \fn(t, \xx) }\, dt}
= 
\lim_{n\to\infty}    
\frac{ \sum_{i = 1}^n \var{ \E{f_i(\xx) | \xx_U} } } 
     { \sum_{i=1}^n \lambda_i}.~\square 
\]

\end{document}